\newcommand\Comment[1]{}
\DeclareTextFontCommand{\texttt}{\ttfamily\upshape}
\DeclareMathOperator*{\argmin}{arg\,min}
\DeclareMathOperator*{\pre}{pre}
\DeclareMathOperator*{\Tf}{\mathbf{T}}
\newcommand{\cO}{\mathcal{O}}
\newcommand{\FPT}{{\sf FPT}\xspace}
\newcommand{\NP}{{\sf NP}\xspace}
\newcommand{\Pclass}{{\sf P}\xspace}
\newcommand{\str}[1]{\texttt{#1}}
\def\dd{\mathinner{.\,.}}
\newcommand{\Scal}{\mathcal S}
\newcommand{\dSH}{\ensuremath{\partial_{\str{SH}}}}
\newcommand{\dS}{\ensuremath{\partial_{\str{S}}}}
\newcommand{\dH}{\ensuremath{\partial_{\str{Ham}}}}
\newcommand{\anyConsensus}[1]{\text{\normalfont\str{(#1)}}-\allowbreak\text{{\sc Consensus}}\xspace}
\newcommand{\rHConsensusM}{\anyConsensus{r,\dH}-\textsc{Mixed}}
\newcommand{\rsHConsensusM}{\anyConsensus{rs,\dH}-\textsc{Mixed}}
\newcommand{\rsSHconsensus}{\anyConsensus{rs,\dSH}}
\newcommand{\rSHconsensus}{\anyConsensus{r,\dSH}}
\newcommand{\sSHconsensus}{\anyConsensus{s,\dSH}}
\newcommand{\rsSconsensus}{\anyConsensus{rs,\dS}}
\newcommand{\rSconsensus}{\anyConsensus{r,\dS}}
\newcommand{\sSconsensus}{\anyConsensus{s,\dS}}
\newcommand{\permGroup}{\ensuremath{\mathfrak{S}_n}\xspace}
\newcommand{\perm}[1]{\ensuremath{\mathfrak{S}_{#1}}\xspace}
\newcommand{\swapstr}[1]{\ensuremath{h_{#1}}\xspace}
\newcommand{\rsDistconsensus}{\anyConsensus{rs,$\partial$}}
\newcommand{\rDistconsensus}{\anyConsensus{r,$\partial$}}
\newcommand{\sDistconsensus}{\anyConsensus{s,$\partial$}}
\newcommand{\probdef}[3]{
\vspace{2mm}
\noindent\fbox{
   \begin{minipage}{0.96\textwidth}
   \textsc{#1}\\
   {\bf{Input:}} #2  \\
   {\bf{Question:}} #3
   \end{minipage}
   }
   \vspace{2mm}
}
\begin{document}
\title{String Consensus Problems\\ with Swaps and Substitutions}
\author{Estéban Gabory\inst{1}\orcidID{0000-0002-9897-1512}\and Laurent Bulteau\inst{2}\orcidID{0000-0003-1645-9345} \and
Gabriele Fici\inst{1}\orcidID{0000-0002-3536-327X} \and
 Hilde Verbeek \inst{3}\orcidID{0000-0002-2399-3098} }
\authorrunning{E. Gabory et al.}
\institute{Dipartimento di Matematica e Informatica, Università di Palermo, Palermo, Italy \and LIGM, CNRS, Université Gustave Eiffel, F77454 Marne-la-vallée, France \and
CWI, Amsterdam, The Netherlands}
\maketitle              %
\begin{abstract}
String consensus problems aim at finding a string that minimizes some given distance with respect to an input set of strings. In particular, in the \textsc{Closest String} problem, we are given a set of strings of equal length and a radius~$d$. The goal is to find a new string that differs from each input string by at most~$d$ substitutions. We study a generalization of this problem where, in addition to substitutions, swaps of adjacent characters are also permitted, each operation incurring a unit cost. Amir et al.\ showed that this generalized problem is \NP-hard, even when only swaps are allowed. In this paper, we show that it is \FPT with respect to the parameter~$d$. Moreover, we investigate a variant in which the goal is to minimize the \emph{sum} of distances from the output string to all input strings. For this version, we present a polynomial-time algorithm.

\keywords{Closest String  \and Parameterized Algorithms \and Swap Distances\and 
String Consensus.}

\end{abstract}

\section{Introduction}
Consensus problems aim at finding, given a collection of objects and a distance function, an object that is as close as possible to each and every object in the collection. In the {\sc Closest String} problem, objects are strings and the distance constraint is given by an integer $d$: the solution must be at Hamming distance at most $d$ (i.e., at most $d$ substitutions away) from each input string. We call such a constraint a \emph{radius constraint}, which is generally harder to satisfy than the corresponding \emph{sum constraint}: minimize the sum of distances (indeed,  {\sc Closest String} is \NP-hard~\cite{FRA} and \FPT for parameter~$d$~\cite{GRA}, whereas the sum of distances can easily be optimized by taking the most frequent character at each position).
Betzler et al.~\cite{betzler2011average} give an algorithmic framework to show that the median problem (i.e., the sum-only variant) for various distances is \FPT for the average distance between input objects. %

{\sc Closest String} has many application domains. In bioinformatics, where strings such as genomes from the same species are often compared~\cite{CHEN2012164}, the use of the radius~$d$ as a parameter is justified by the similarity of the inputs.  A possible extension of {\sc Closest String} consists of choosing, among all solution strings satisfying the radius constraint, the one minimizing the sum of distances to input strings~\cite{amir-3-strings}. Incidentally, most algorithms for variants of \emph{radius-only} {\sc Closest String} can be adapted to this \emph{radius and sum} problem (in particular, {\sc Closest String} is \FPT for $d$ in this setting as well~\cite{BUL}).

In this paper, we consider not only substitution operations, but also \emph{swaps}, i.e., exchange of consecutive characters, and define the \emph{Swap distance} (\dS). 
Swaps are used as an extension to the Levenshtein distance (yielding the Damerau–Levenshtein distance), e.g., in spell checkers where they help improve predictions \cite{damerau}. 
Importantly, their formalism enforces that swaps are pairwise disjoint, hence some pairs of strings may be incomparable even if they use the same multi-set of characters\footnote{The variant where swaps may overlap corresponds to the Kendall-tau distance.}. 
For example, strings \str{abcd} and \str{badc} are at distance 2, but \str{abc} and \str{bca} are incomparable, since swaps are pairwise disjoint. 
We further consider the \emph{Swap+Hamming} distance (written \dSH), where permitted operations are both substitutions and swaps (on distinct characters). 

The aim of this paper is to study the parameterized complexity of {\sc Closest String} for both Swap and Swap+Hamming measures, and for any of the radius-only, sum-only, or radius+sum objective functions. 
The main parameters we consider are the radius bound $d$ and the alphabet size. %
For both distances, we show that the radius-only variant is \FPT, and that the sum-only variant is in \Pclass. Additionally, for the Swap distance, we show that the radius+sum variant is \FPT for $d$.

\paragraph*{Problems.}

We consider the following three problems, where $\partial$ is any distance function.%

\probdef{\rDistconsensus}{A set of $k$ strings $\Scal=\{s_1, \dd , s_k\}$ each of length $n$, integer~$d$.
}{Is there a length-$n$ string~$s^*$ such that~$\displaystyle \max_{s\in \Scal} \partial(s,s^*)\leq d$?}

\probdef{\sDistconsensus}{A set of $k$ strings $\Scal=\{s_1, \dd , s_k\}$ each of length $n$, integer~$D$.
}{Is there a length-$n$ string~$s^*$ such that~$\displaystyle \sum_{s\in \Scal} \partial(s,s^*)\leq D$?}

\probdef{\rsDistconsensus}{A set of $k$ strings $\Scal=\{s_1, \dd , s_k\}$ each of length $n$, integers~$d$ and $D$.
}{Is there a length-$n$ string~$s^*$ such that~$\displaystyle \max_{s\in \Scal} \partial(s,s^*)\leq d$ and $\displaystyle \sum_{s\in \Scal} \partial(s,s^*)\leq D$?}

Note that previously-studied problems can be named within this formalism, e.g., {\sc Closest String} is \anyConsensus{r,\dH} and \anyConsensus{r,Levenshtein} is known as {\sc Center String}. %

\paragraph*{Our contributions.}

We summarize our main contributions in Table~\ref{table:our-contributions}.

 Additionally, we prove that neither $\rSconsensus(d,n,|\Sigma|)$ nor $\rsSconsensus(d,n,|\Sigma|)$ (Theorem~\ref{thm:results swap}), nor $\rSHconsensus(d,n,|\Sigma|)$, nor $\rsSHconsensus(d,n,|\Sigma|)$ (Theorem~\ref{thm:SH no polyk}) admit a polynomial kernel unless $\NP\subseteq \text{co}\NP/\text{poly}$.
\begin{table}
    \centering
    \begin{tabular}{c||c@{\ }>{\small}c|c@{\ }>{\small}c}                                    
                      & \dS& Section~\ref{sec:ds} & \dSH & Section~\ref{sec:dsh}\\
\hhline{=::====}        \makecell{Sum \\(\normalfont\str{s})}             &  $\cO(kn)$ & Cor.~\ref{cor:s-swap}    &   $\cO(k^3n^2)$ &Thm.~\ref{thm:sSHconsensus poly} %
\\
\hhline{-||----}        \makecell{Radius\\ (\normalfont\str{r})}     &  \makecell{$\FPT(d)$  \\ Kernel of size $\cO(k^2\log d)$ }    &  \makecell{  Thm.~\ref{thm:results swap}\\ Thm.~\ref{thm:results swap}\\ }& \makecell{$\FPT(d)$ %
}& \makecell{Thm.~\ref{thm:rSHconsensus FPT} %
} \\ 
\hhline{-||----}    \makecell{Radius and Sum \\(\normalfont\str{rs})}   & \makecell{$\FPT(d)$  \\ Kernel of size $\cO(k^2\log d)$ \\ Kernel of size $\cO(D^3\log D)$}  & \makecell{Thm.~\ref{thm:results swap}\\  Thm.~\ref{thm:results swap}\\ Thm.~\ref{thm:results swap}\\  }      &  \makecell{Open%
} & \makecell{%
} 
    \end{tabular}\\[2mm]
    \label{table:our-contributions}
    \caption{Summary of our contributions.}
\end{table}

\paragraph*{Related work.}

Consensus problems under the Hamming distance have been extensively studied from both classical and parameterized complexity perspectives. It is known that $\anyConsensus{r,\dH}$ is \NP-complete~\cite{FRA}. However, when parameterized by the radius $d$, the problem becomes fixed-parameter tractable (FPT), and $\anyConsensus{r,\dH}(d,k)$ admits a polynomial kernel of size $\mathcal{O}(k^2 d \log k)$~\cite{GRA}. In the sum-of-distances variant $\anyConsensus{s,\dH}$, the problem becomes much simpler: the optimal consensus string is the column-wise majority string, and can be computed in linear time~\cite{GRA}. The branching algorithm underlying the FPT result for $\anyConsensus{r,\dH}$ can be extended to handle the combined radius-and-sum variant $\anyConsensus{rs,\dH}$, which also lies in $\FPT(d)$~\cite{BUL}. With respect to kernelization, while no polynomial kernel exists for $\anyConsensus{r,\dH}$ and $\anyConsensus{rs,\dH}$ when parameterized by $(d, n)$ unless $\NP \subseteq \text{co}\NP/\text{poly}$, both $\anyConsensus{r,\dH}(d,k)$ and $\anyConsensus{rs,\dH}(d,k)$ have polynomial kernels of size $\mathcal{O}(k^2 d \log k)$, and $\anyConsensus{rs,\dH}$ additionally admits a polynomial kernel of size $\mathcal{O}(D^3 \log D)$~\cite{BasavarajuEtAl2018,BUL}. The problem $\sSconsensus$ was introduced in~\cite{AMI}, which showed that \sSconsensus is \NP hard. Before that, pattern matching with swaps was introduced in~\cite{10.1007/3-540-60044-2_50}, and solved in~\cite{AMIR2000247}. Note that other definitions of swap distance exist, e.g., in~\cite{cunha2025complexityalgorithmsswapmedian}, where swap distance is considered between permutations and allows for swapping several times at a given position (namely, the Kendall-tau distance). 

\section{Preliminaries}

\paragraph*{Parameterized Complexity.}
We use general notions of parameterized complexity and kernels \cite{Downey,FominEtAl2019Book}. In a nutshell (omitting some technical difficulties not irrelevant to our case), for a parameterized problem $P(k)$, an \emph{FPT algorithm} is an exact algorithm solving $P$ in $f(k)\mathrm{poly}(n)$ time where $n$ is the instance size and $k$ is the parameter. A \emph{kernel} is a polynomial-time algorithm reducing an instance of $P(k)$ to another instance with size $f(k)$. Such a kernel is \emph{polynomial} if $f$ is polynomial. 
It is known that there exists a kernel for $P(k)$ if and only if $P(k)\in \FPT$, but having a polynomial kernel is seemingly more restrictive. Finally, the \emph{parameterized reductions} we use in this paper are polynomial-time reductions that also preserve the value of the parameter (up to some polynomial). In other words, a parameterized reduction from $P(k)$ to $Q(k')$ is a function $f$ with: \textbf{(i)} $(x, k)$ is a yes-instance for $P$ if and only if $(x',k')=f(x,k)$ is a yes-instance for $Q$, and \textbf{(ii)} $k' \le \mathrm{poly}(k)$. With such a parameterized reduction, any \FPT (resp.~polynomial kernel) for $Q(k')$ can be carried over as an \FPT algorithm (resp.~polynomial kernel) for $P(k)$ \cite{BodlaenderEtAl2011}.

\paragraph*{Strings.}
Let $\Sigma^n$ be the set of \emph{strings} $s=s[1]\cdots s[n]$ of \emph{length} $|s|=n$ over an alphabet $\Sigma$, the elements of which are called \textit{letters}. Given a letter $\str{a}\in\Sigma$, we write $|s|_{\str{a}}=|\{p\in[1 \dd n]~|~s[p]=\str{a}\}$|. The \emph{reversal} of a string $s$ is the string $s^r=s[|s|]\cdots s[|1|]$. %
Given an arbitrary interval $I=[i \dd j]\subseteq [1 \dd n]$, we write $s[I]=s[i \dd j]$ for the \emph{substring} of $s$ starting at position $i$ and ending at position $j$, and we write $s[j \dd i]=s[i \dd j]^r$. A \emph{prefix} of $s$ is a substring of the form $s[1 \dd p]$ and a \emph{suffix} of $s$ is a substring of the form $s[p \dd |s|]$, for some $p$. By $s_1\cdot s_2$ we denote the \emph{concatenation} of two strings $s_1$ and $s_2$, i.e., $s_1\cdot s_2=s_1[1]\cdots s_1[|s_1|]\cdot s_2[1]\cdots s_2[|s_2|]$. We say that two strings $s_1$, $s_2$ are \emph{matching} at a position $p$ if $s_1[p]=s_2[p]$, and that they are \emph{mismatching} %
otherwise. We say that $s_2$ is obtained from $s_1$ by a \emph{substitution} at position $p$ with letter $\str{a}$ if $s_1[p]\neq s_2[p]=\str{a}$ and $s_1[p']=s_2[p']$ when $p'\neq p$. Given a set $\Scal=\{s_1, \dd , s_k\}$ containing strings having the same length, we call \emph{$p$th column} the set $\Scal[p]=\{s[p]|~s\in\Scal\}$. We say that $\Scal[p]$ is \emph{dirty} if it contains at least two distinct elements; otherwise, we say that it is \emph{clean}. We also write $\Scal[i\dd j]$ for the \emph{segment} of $\Scal$ between $i$ and $j$, which is defined as the set $\{s[i\dd j],s\in\Scal\}$. Given two binary strings $h_1$ and $h_2$, we write $h_1\oplus h_2$ for the string obtained by taking the position-wise \emph{XOR} of both input strings.

\paragraph*{Swaps.}
Given an integer $n$, we denote by \permGroup the set of all permutations of $[n]=[1 \dd n]$.
Given an integer $p$, the \emph{swap at positions $(p,p+1)$} (or simply at position $p$) is the operation transforming a string $s$ into $s[1]\cdots s[p-1] \cdot  s[p+1] \cdot s[p] \cdot s[p+2] \cdots s[|s|]$. Two swaps at positions $p$ and $q$ respectively are \emph{disjoint} or \emph{compatible} if $|p-q|\geq 2$ (in which case, the two operations commute).
A \emph{swap permutation} of size $m$ is a set of $m$ pairwise-compatible swaps (it can also be seen as a permutation of \permGroup consisting of $m$ support-disjoint adjacent transpositions). The \emph{swap string} $h_\sigma$ of a swap permutation $\sigma$ is the length-$(n-1)$ binary string $h$ such that $h[p]=\str{1}$ if $\sigma$ contains a swap at position $(p,p+1)$, and $h[p]=\str{0}$ otherwise.  

Two strings $s_1$ and $s_2$ are \emph{matching} if there exists a swap permutation transforming $s_1$ into $s_2$. To ensure the uniqueness of the swap permutations between strings, we say that such a permutation is \emph{valid} if it never swaps two identical letters: any swap permutation can be made valid by removing such superfluous swaps. In that case, we write \perm{s_1,s_2} for this permutation and \swapstr{s_1,s_2} for the corresponding swap string. We state the following lemma:

\begin{restatable}{lemma}{unicity}\label{lem:unicity}
Given two matching strings $s_1$ and $s_2$, \perm{s_1,s_2} and \swapstr{s_1,s_2} are unique and can be computed in linear time.
\end{restatable}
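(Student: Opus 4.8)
The plan is to prove uniqueness and computability of the valid swap permutation by a greedy left-to-right argument. First I would establish \emph{uniqueness}. Suppose $s_1$ and $s_2$ are matching, and consider the leftmost position $p$ at which they differ, i.e. $s_1[p]\neq s_2[p]$ while $s_1[p']=s_2[p']$ for all $p'<p$. Since any valid swap permutation $\sigma$ uses only support-disjoint adjacent transpositions, the letter $s_1[p]$ can only be moved by a swap at position $(p-1,p)$ or $(p,p+1)$; the former is ruled out because positions before $p$ already agree and a swap there would create a mismatch at $p-1$ (and, being support-disjoint, cannot be ``undone''). Hence to fix position $p$, $\sigma$ must contain the swap at $(p,p+1)$, and moreover validity forces $s_1[p]\neq s_1[p+1]$. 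This pins down the action of $\sigma$ at the leftmost mismatch \emph{independently of $\sigma$}; recursing on the string obtained after applying this forced swap, every subsequent choice is likewise forced. Therefore at most one valid swap permutation exists, proving uniqueness of both \perm{s_1,s_2} and, consequently, \swapstr{s_1,s_2}.

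Next I would turn the same argument into a constructive linear-time \emph{algorithm}, which simultaneously reproves \emph{existence} of the valid permutation whenever $s_1$ and $s_2$ match. Scan both strings in parallel from left to right with a single index $p$. At each step, if $s_1[p]=s_2[p]$, advance $p$ by one. Otherwise a mismatch occurs, and by the forcing argument above the only way to repair it is to swap positions $p$ and $p+1$ of $s_1$: I would check that $s_1[p]=s_2[p+1]$ and $s_1[p+1]=s_2[p]$ (and $s_1[p]\neq s_1[p+1]$, so the swap is valid); if this check fails, declare the strings non-matching. Record the swap at position $p$ in the output swap string $h$, then advance $p$ by two, since support-disjointness means the next swap cannot touch position $p+1$. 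Each position is examined a constant number of times, giving linear time. The output $h$ is exactly \swapstr{s_1,s_2}, and the corresponding set of swaps is \perm{s_1,s_2}.

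The main obstacle is justifying the forcing step rigorously, namely that a valid swap permutation genuinely has \emph{no freedom} at the leftmost mismatch. The subtlety is to rule out that the mismatch at $p$ is resolved by a swap further to the right or by some interaction of several swaps; here the support-disjointness hypothesis is essential, as it guarantees each position is involved in at most one adjacent transposition, so a letter at position $p$ can move to position $p-1$ or $p+1$ but no further. Combined with the fact that all positions before $p$ already match (so no swap may disturb them), this confines the repair of position $p$ to the single swap $(p,p+1)$. Once this local rigidity is in place, uniqueness, existence, and the correctness of the greedy scan all follow by a straightforward induction on the string length, and the linear-time bound is immediate from the two-step advance after each recorded swap.
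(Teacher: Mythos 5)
Your proof is correct, and its algorithmic half coincides with the paper's: both argue that at the leftmost mismatch $p$ no valid permutation can swap $(p-1,p)$ (it would disturb the already-matching position $p-1$), hence the swap $(p,p+1)$ is forced, and the greedy left-to-right scan with a two-step advance follows. The genuine difference is in how uniqueness is established. The paper proves it by a separate, global contradiction argument: assuming two valid swap permutations $\sigma\neq\tau$, it considers the \emph{last} position $(p,p+1)$ swapped by exactly one of them (say $\sigma$), deduces that $\tau$ must then swap $(p+1,p+2)$ to account for $s_2[p+1]$, and observes that $(p+1,p+2)$ is in turn swapped by exactly one of the two, contradicting maximality. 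You instead derive uniqueness from the same left-to-right forcing that drives the algorithm, recursing on the string after the forced swap. Your route is more economical---one induction simultaneously yields uniqueness, existence of the greedy output, and the linear running time---but it leans on an observation you should state explicitly to make the recursion well-founded: a valid swap creates mismatches at \emph{both} of its positions, so no swap of a valid permutation between the updated string and $s_2$ can touch the already-matched prefix, and the remaining swaps really do form a valid permutation of the suffix. Relatedly, your step that a swap at $(p-1,p)$ ``would create a mismatch at $p-1$'' itself uses validity: a swap of identical letters there would create no mismatch, and is excluded only because valid permutations never swap equal letters. Neither point is a gap, just bookkeeping your sketch compresses; the paper's two-argument structure avoids this bookkeeping at the cost of proving forcing twice in different guises.
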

\begin{proof}
Let $\sigma$ and $\tau$ be two swap permutations between $s_1$ and $s_2$. Since permutations have a unique decomposition into support-disjoint cycles (here, transpositions) \cite{Dixon1996}, the string $h_{s_1,s_2}$ is well-defined and unique if and only if $\sigma=\tau$.  

If $\sigma\neq\tau$, let $(p,p+1)$ be the last positions that are swapped by exactly one of the permutations, without loss of generality, $\sigma$ swaps $(p,p+1)$ but not $\tau$. By definition, it implies that $s_2[p+1]=s_1[p]\neq s_1[p+1]=s_2[p]$. But since $s_2[p+1]=s_1[\tau(p+1)]$, we necessarily have $\tau(p+1)\neq p+1$, and since $(p,p+1)$ is, by assumption, not swapped by $\tau$, then $(p+1,p+2)$ has to be a swapped by $\tau$. But since $\sigma$ swaps $(p,p+1)$, that is incompatible with $(p+1,p+2)$, those positions are swapped by exactly one of the permutations. This contradicts our initial assumption.

We can compute $h_{s_1,s_2}$ in linear time by reading the string from left to right, and starting to swap at the first unmatching symbol. Indeed, if for some $p_0$ with $s_1[p_0]\neq s_2[p_0]$ we have $s_1[p]=s_2[p]$ for every $p\le p_0-1$, there cannot be a swap at positions $(p_0-1,p_0)$, so there is a swap at positions $(p_0,p_0+1)$ if the strings are matching. We conclude by induction.
\qed \end{proof}

We say that matching strings $s_1$ and $s_2$ have a \emph{swap} at position $(p,p+1)$ (or simply at position $p$) if $\perm{s_1,s_2}$ contains the corresponding swap (i.e.,  $\swapstr{s_1,s_2}[p]=\str{1}$), and we write $(p,p+1)\in\perm{s_1,s_2}$. Note that this is more precise than simply having $s_1[p]=s_2[p+1]$ and $s_2[p+1]=s_1[p]$, as can be seen in the following example.

\begin{example}\label{bad swap}
Let  $s_1=\str{abab}$ and $s_2=\str{baba}$.
One has $\swapstr{s_1,s_2}=\str{101}$ (so their swap distance is $2$), and $s_1$ and $s_2$ do not have a swap at position 2, even though $s_1[2,3]=\str{ba}$ and $s_2[2,3]=\str{ab}$. 
\end{example}

\begin{remark}\label{obs:invariant}
    Given two strings $s_1$, $s_2$ of length $n$ that are matching, one has $|s_1|_{x}=|s_2|_{x}$ for any letter $x\in\Sigma$. Furthermore, given a position $p$ such that $h_{s_1,s_2}[p]=\str{0}$, one has $|s_1[1 \dd p]|_{x}=|s_2[1 \dd p]|_{x}$, and $|s_1[p+1 \dd n]|_{x}=|s_2[p+1 \dd n]|_{x}$, for any letter $x\in\Sigma$. This can be verified by considering the swap permutation as two distinct ones, respectively on the prefixes and suffixes of $s_1$ and $s_2$.
\end{remark}

The following technical Lemma is illustrated in Figure~\ref{fig:three-way-match}. %

\begin{restatable}{lemma}{threewaymatch}\label{lem:three-way-match}
Given three strings $s_1,s_2,s_3$ such that both $s_1,s_2$ and $s_2,s_3$ are matching, let $h=\swapstr{s_1,s_2}\oplus \swapstr{s_2,s_3}$.
If $h$ does not have two consecutive $\str{1}$s, then $s_1$ and $s_3$ are matching and  $h=\swapstr{s_1,s_3}$.
If $h$ has two consecutive $\str{1}$s at positions $p-1$ and $p$ for some $p$ (i.e., $h[p-1]=h[p]=\str{1}$), then $s_1$ and $s_3$ do not match, and for any string $t$ matching both $s_1$ and $s_3$ we have $t[p-1 \dd p+1]=s_2[p-1 \dd p+1]$.
\end{restatable}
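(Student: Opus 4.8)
\emph{Overview.} My plan is to pass from strings to their swap permutations and reason about the composition. Writing $\sigma=\perm{s_1,s_2}$ and $\tau=\perm{s_2,s_3}$, each is a set of pairwise-compatible (support-disjoint) adjacent transpositions, and by definition the XOR $h$ marks exactly the positions swapped by \emph{exactly one} of $\sigma,\tau$; positions swapped by both, or by neither, carry a \str{0} in $h$. The single most useful fact throughout is the following strengthening of Remark~\ref{obs:invariant} (via the validity in Lemma~\ref{lem:unicity}), which I establish first: for two matching strings the swap string is \str{0} at a cut $q\mid q{+}1$ \emph{iff} the two strings agree on all prefix letter-counts $|\cdot[1\dd q]|_x$. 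Indeed a valid swap at $q$ exchanges two \emph{distinct} letters across that cut, so it changes some prefix count, while Remark~\ref{obs:invariant} gives the converse. A short bookkeeping across the two steps $s_1\to s_2\to s_3$ then shows $|s_1[1\dd q]|_x=|s_3[1\dd q]|_x$ for all $x$ \emph{iff} $h[q]=\str{0}$ (a common swap at $q$ is undone, a lone swap perturbs the counts).

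\emph{Case: $h$ has no two consecutive \str{1}s.} Here the positions swapped by exactly one permutation are pairwise non-adjacent, so the associated adjacent transpositions are support-disjoint; moreover a position swapped by \emph{both} $\sigma$ and $\tau$ cannot be adjacent to such a position (compatibility inside each of $\sigma,\tau$ forbids it), and a double swap composes to the identity. Hence the permutation taking $s_1$ to $s_3$ is exactly the swap permutation with swap string $h$, so $s_1$ and $s_3$ match. Combining the characterization above at the level of $(s_1,s_3)$ with the two-step bookkeeping, $\swapstr{s_1,s_3}[q]=\str{0}\iff$ prefix counts agree $\iff h[q]=\str{0}$, giving $\swapstr{s_1,s_3}=h$.

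\emph{Case: $h$ has two consecutive \str{1}s at $p{-}1,p$.} Since neither $\sigma$ nor $\tau$ contains two adjacent swaps, the \str{1} at $p{-}1$ and the \str{1} at $p$ must come from different permutations; say $\sigma$ swaps $(p{-}1,p)$ and $\tau$ swaps $(p,p{+}1)$ (the other case is symmetric). Unwinding the two swaps yields the local pattern $s_1[p{-}1\dd p]=(a,b)$, $s_2[p{-}1\dd p{+}1]=(b,a,c)$, $s_3[p\dd p{+}1]=(c,a)$, where $a:=s_1[p{-}1]=s_2[p]=s_3[p{+}1]$, $b:=s_1[p]=s_2[p{-}1]$, $c:=s_2[p{+}1]=s_3[p]$; validity of $\sigma$ and $\tau$ forces $a\neq b$ and $a\neq c$. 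Granting the claim about $t$, the non-matching assertion is immediate: if $s_1,s_3$ matched, then $t=s_1$ would match both, forcing $s_1[p{-}1\dd p{+}1]=s_2[p{-}1\dd p{+}1]$, i.e.\ $a=b$, a contradiction.

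\emph{The window claim (main obstacle).} It remains to show that every $t$ matching both $s_1$ and $s_3$ satisfies $t[p{-}1\dd p{+}1]=(b,a,c)$. The crux is that the distinguished occurrence of $a$ sits at position $p{-}1$ in $s_1$ but at $p{+}1$ in $s_3$, a displacement of two that no single swap permutation realizes directly; concretely, the bookkeeping above gives $|s_1[1\dd q]|_a-|s_3[1\dd q]|_a=+1$ for \emph{both} $q=p{-}1$ and $q=p$. Now let $g_1=\swapstr{t,s_1}$ and $g_3=\swapstr{t,s_3}$: each is a valid swap string (no two consecutive \str{1}s), and a prefix count of $t$ can differ from that of $s_1$ (resp.\ $s_3$) by at most $1$, and only at a cut carrying a \str{1} in $g_1$ (resp.\ $g_3$). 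Comparing the $a$-discrepancies of $t$ against $s_1$ and against $s_3$ at the two cuts $p{-}1,p$, the ``$+1$ on two consecutive cuts'' together with the no-two-adjacent-\str{1}s property rules out every assignment but one: $g_1[p{-}1]=\str{1}$ and $g_3[p]=\str{1}$. Reading these two swaps off directly then pins the window, $t[p{-}1\dd p]=(s_1[p],s_1[p{-}1])=(b,a)$ and $t[p\dd p{+}1]=(s_3[p{+}1],s_3[p])=(a,c)$, as required. I expect this last step---disentangling the two swap strings of $t$ across the length-two window---to be the main technical obstacle; reading the positions off the \emph{forced} \str{1}s (rather than from the letters) is what lets the argument sidestep any case analysis on coincidences among $a,b,c$ (e.g.\ $b=c$).
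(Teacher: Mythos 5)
Most of your argument is sound and, at its core, runs parallel to the paper's own proof: the paper also works with prefix/suffix counts of the single distinguished letter $a$ (called $\str{x}$ there), and your first case, your iff-strengthening of Remark~\ref{obs:invariant}, and your reduction of the non-matching claim to the window claim (via $t=s_1$) are all correct---the last of these is a nice simplification the paper does not make. The gap is in the elimination step of the window claim. Write $g_1=\swapstr{t,s_1}$, $g_3=\swapstr{t,s_3}$, and $A_1(q)=|t[1\dd q]|_a-|s_1[1\dd q]|_a$, $A_3(q)=|t[1\dd q]|_a-|s_3[1\dd q]|_a$, so that $A_3(q)=A_1(q)+1$ at both cuts $q\in\{p-1,p\}$. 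The constraints you invoke---each discrepancy lies in $\{-1,0,+1\}$, is nonzero only at a \str{1}-position of the corresponding swap string, and neither $g_1$ nor $g_3$ has two adjacent \str{1}s---do \emph{not} rule out every assignment but one. They leave exactly two: (i) $g_1[p-1]=\str{1}$, $g_3[p]=\str{1}$, with $(A_1,A_3)=(-1,0)$ at cut $p-1$ and $(0,+1)$ at cut $p$ (the one you keep); and (ii) the mirror assignment $g_3[p-1]=\str{1}$, $g_1[p]=\str{1}$, with $(A_1,A_3)=(0,+1)$ at cut $p-1$ and $(-1,0)$ at cut $p$. Assignment (ii) violates no counting or adjacency constraint, and it would pin a \emph{different} window ($t[p-1]=s_3[p]$ and $t[p+1]=s_1[p]$), so it must genuinely be excluded. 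Purely positional reasoning cannot do this: the data ``$A_3=A_1+1$ at two consecutive cuts'' is symmetric under exchanging the roles of the two cuts, and that exchange maps (i) to (ii).

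The fix is short, but it requires exactly the kind of letter inspection you declare you are sidestepping. In assignment (ii), the swap of $t$ with $s_3$ at $(p-1,p)$ having $a$-discrepancy $+1$ forces $t[p-1]=a$, hence $s_3[p]=t[p-1]=a$, contradicting $s_3[p]=c$ and $a\neq c$ (validity of the swap in $\perm{s_2,s_3}$); equivalently, the swap of $t$ with $s_1$ at $(p,p+1)$ having $a$-discrepancy $-1$ forces $s_1[p]=t[p+1]=a$, contradicting $s_1[p]=b$ and $a\neq b$. Note that only the inequalities $a\neq b$ and $a\neq c$, which you already derived from validity, are needed---no case analysis on coincidences such as $b=c$ ever arises---so the stated reason for avoiding letters was unfounded. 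With this one additional observation your proof closes, and it then essentially coincides with the paper's argument, which makes the same deduction by showing that the counts force $t[p]=a$ and then reading the two swaps $(p-1,p)\in\perm{s_1,t}$ and $(p,p+1)\in\perm{t,s_3}$ off the letters.
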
%
\begin{proof}
For the first statement, one can observe that if $h$ does not have two consecutive $\str{1}$s, then by definition all swaps $(p,p+1)$ with $h[p]=\str{1}$ are compatible. Those swaps are exactly those that $s_1$ and $s_2$, or $s_2$ and $s_3$ have, but not both. Since swaps are involutions, we obtain $s_3$ after performing those swaps on $s_1$ (remembering that compatible swaps commute). We conclude that $h=h_{s_1,s_3}$ by uniqueness (Proposition~\ref{lem:unicity}).

For the second statement, let us assume that $h_{s_1,s_2}\oplus h_{s_2,s_3}[p-1,p]=\str{11}$ for some $p\in[2 \dd n-1]$. Without loss of generality, $(p-1,p)\in \perm{s_1,s_2}$ (hence $(p,p+1)\not\in \perm{s_1,s_2}$) and $(p,p+1)\in \perm{s_2,s_3}$ (hence $(p-1,p)\not\in\perm{s_2,s_3}$). Let us write $\str{x}= s_1[p-1]=s_2[p]=s_3[p+1]$, $\ell=|s_1[1 \dd p-1]|_{\str{x}}$ and $r=|s_1[p+1 \dd n]|_{\str{x}}$. Since $(p-1,p)\in\perm{s_1,s_2}$, we have $s_1[p]\neq s_1[p-1]=\str{x}$, so $|s_1|_{\str{x}}=|s_1[1 \dd p-1]|_{\str{x}}+|s_1[p+1 \dd n]|_{\str{x}}=\ell+r$. Now, since $(p,p+1)\not\in\perm{s_1,s_2}$, one has by Observation~\ref{obs:invariant} that 
 $|s_2[p+1 \dd n]|_{\str{x}}=|s_1[p+1 \dd n]|_{\str{x}}=r$, and since $s_2[p]=x$ one has $|s_2[p \dd n]|_{\str{x}}=r+1$. Finally, since $(p-1,p)\not\in\perm{s_2,s_3}$, we know that $|s_3[p \dd n]|_{\str{x}}=|s_2[p \dd n]|_{\str{x}}=r+1$. 

Now, assume that $s_1$ and $s_3$ are matching, namely that there exists a swap transformation $\perm{s_1,s_3}$. Since $s_3[p]\neq \str x = s_1[p-1]$, we know that $(p-1,p)\not\in\perm{s_1,s_3}$. This implies from Observation~\ref{obs:invariant} that $|s_3[1 \dd p-1]|_{\str{x}}=|s_1[1 \dd p-1]|_{\str{x}}=\ell$. But this means that $|s_3|_{\str{x}}=|s_3[1 \dd p-1]|_{\str{x}}+|s_3[p \dd n]|_{\str{x}}=\ell+r+1$ which contradicts $|s_3|_{\str{x}}=|s_1|_{\str{x}}=\ell+r$.

Finally, since $|s_1[1 \dd p-1]|_\str{x}=\ell$, $|s_1[p+1 \dd n]|_\str{x}=r$, and $|s_3[1 \dd p-1]|_\str{x}=\ell-1$, $|s_3[p+1 \dd n]|_\str{x}=r+1$, for any string $t$ matching $s_1$ and $s_3$ we necessarily have $t[p]=\str{x}$ (so that the last occurrence of $\str{x}$ on the prefix of $s_1$ can be swapped to the suffix of $s_3$). We deduce that $(p-1,p)\in\perm{s_1,t}$ and $(p,p+1)\in\perm{t,s_3}$, hence $t[p-1 \dd p+1]=s_1[p] \str x s_3[p]$. Note that this uses the assumption  $(p-1,p)\in \perm{s_1,s_2}$: in the other case ( $(p-1,p)\in \perm{s_2,s_3}$), we have $t[p-1 \dd p+1]=s_3[p] \str x s_1[p]$.
\qed \end{proof}

\begin{figure}
    \centering
    \includegraphics[width=0.2\linewidth]{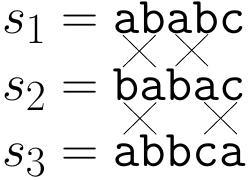}
    \caption{Example illustrating Lemma~\ref{lem:three-way-match}. Black crosses represent swaps between pairs of adjacent letters. One has $h_{s_1,s_2}=\str{1010}$, $h_{s_2,s_3}=\str{1001}$ and $h_{s_1,s_2}\oplus h_{s_2,s_3}=\str{0011}$. By Lemma~\ref{lem:three-way-match}, any string $t$ that matches both $s_1$ and $s_3$ must have $\str{bac}$ as a suffix---this is because it is the only suffix that can be swapped both into $\str{abc}$ and $\str{bca}$. For example, $t=\str{abbac}$ matches both $s_1$ and $s_3$. In this case, one has $h_{t,s_1}=\str{0010}$, $h_{t,s_3}=\str{0001}$, and $h_{t,s_2}=h_{t,s_1}\oplus h_{s_1,s_2}=h_{t,s_3}\oplus h_{s_3,s_2}=\str{1000}$.%
    }
    \label{fig:three-way-match}
\end{figure}

\paragraph*{Distances.}
Given two strings $s_1$ and $s_2$, the \emph{Adjacent Swap Distance} $\dS(s_1,s_2)$ (or simply \emph{Swap Distance} in this paper) is $+\infty$ if $s_1$ and $s_2$ are not matching, and the size of $\perm{s_1,s_2}$ otherwise. The \emph{Swap+Hamming distance} $\dSH(s_1,s_2)$ is the minimum over all strings $t$ of $\dS(s_1,t)+\dH(t,s_2)$, where the Hamming distance $\dH$ is the number of positions $p$, $1\leq p\leq n$, such that $s'[p]\neq t[p]$. 
Note that $\dSH$ is a lower bound on both the swap and the Hamming distances.

\section{Algorithms for the Swap Distance only}\label{sec:ds}
\begin{figure}
    \centering
    
    \begin{tikzpicture}
    \matrix [matrix of nodes,
             every node/.style={font=\tt},
            row sep=0] (m)
{
{}&\\[.1em]
&$s_1=$ &a&b&g&a&b&c&a&h&i&d&a&b&d&e&f&e&d&a&$\dS(s_1,s^*)\leq 4$\\
&$s_2=$ &b&a&g&c&a&a&b&i&h&d&a&b&e&f&d&d&e&a&$\dS(s_2,s^*)\leq 4$\\
&$s_3=$ &b&a&g&c&a&b&a&i&h&d&b&a&e&f&d&e&d&a&$\dS(s_3,s^*)\leq 4$\\[.5em]
{}&\\[.4em]
{}&{}&{}&{}&{}&{}&{}&{}&{}&{}&{}&{}&{}&{}&{}&{}&{}&{}&{}&{}& \ \\[.5em]
{}&\\[.1em]
&$s^*=$ &?&?&?&a&c&b&a&?&?&?&?&?&e&d&f&?&?&?& \\[.5em]
{}&{}\\[.1em]
&$s_1'=$&a&b&g&a&c&b&a&h&i&d&a&b&e&d&f&e&d&a&$\dS(s_1',s^*)\leq 2$\\
&$s_2'=$&b&a&g&a&c&b&a&i&h&d&a&b&e&d&f&d&e&a&$\dS(s_2',s^*)\leq 1$\\
&$s_3'=$&b&a&g&a&c&b&a&i&h&d&b&a&e&d&f&e&d&a&$\dS(s_3',s^*)\leq 2$\\[.3em]
{}&{}\\[.1em]
&$h_1 =$&0&0&0&0&0&0&0&0&0&0&0&0&0&0&0&0&0&.&$\dH(h_1,h^*)\leq 2$\\
&$h_2 =$&1&0&0&0&0&0&0&1&0&0&0&0&0&0&0&1&0&.&$\dH(h_2,h^*)\leq 1$\\
&$h_3 =$&1&0&0&0&0&0&0&1&0&0&1&0&0&0&0&0&0&.&$\dH(h_3,h^*)\leq 2$\\[.5em]
{}&{}\\[.1em]
&$h^* =$&1&0&0&0&0&0&0&1&0&0&0&0&0&0&0&0&0&.\\[.5em]
{}&{}\\[.1em]
&${s^*= }$&b&a&g&a&c&b&a&i&h&d&a&b&e&d&f&e&d&a&\\[.5em]
{}&{}\\[.1em]
&$s_1=$ &a&b&g&a&b&c&a&h&i&d&a&b&d&e&f&e&d&a&$\dS(s_1,s^*)= 4$\\
&$s_2=$ &b&a&g&c&a&a&b&i&h&d&a&b&e&f&d&d&e&a&$\dS(s_2,s^*)= 4$\\
&$s_3=$ &b&a&g&c&a&b&a&i&h&d&b&a&e&f&d&e&d&a&$\dS(s_3,s^*)= 3$\\
};
        
    \node[anchor=west] at (m-1-1.west) {1. Input strings};
    \node[anchor=west] at (m-5-1.west) {2. Tangled intervals (input strings are pairwise compatible outside these)};
    \draw[red, |-|] (m-6-6.west) -- (m-6-9.east);
    \draw[red,|-|] (m-6-15.west) -- (m-6-17.east);
    \node[anchor=west] at (m-7-1.west) { 3.   Any string in $C_\Scal$ is necessarily of the form:};
    \draw[red] (m-8-6.north west) rectangle (m-8-9.south east);
    \draw[red] (m-8-15.north west) rectangle (m-8-17.south east);

    \node[anchor=west] at (m-9-1.west) {4. Disentanglement};
    \fill[white, opacity=.6] (m-10-6.north west) rectangle (m-12-9.south east);    
    \fill[white, opacity=.6] (m-10-15.north west) rectangle (m-12-17.south east);

    \node[anchor=west] at (m-13-1.west) {5. With pairwise compatible strings, compute $h_i=h_{s_1', s_i'}$};
    \node[anchor=west] at (m-17-1.west) {6. Compute an $\anyConsensus{r,\dH}$ of $\{h_1,h_2,h_3\}$};
    
    \node[anchor=west] at (m-19-1.west) {7. Retrieve $s^*$ such that $h_*=h_{s_1', s^*}$};
    
    \node[anchor=west] at (m-21-1.west) {8. Check swap distance between $s^*$ and input strings};

    \foreach \pairs [count=\i from 22] in {%
        {3/4,7/8,10/11,15/16},
        {6/7,8/9,16/17,18/19},
        {6/7,12/13,16/17}%
        } {
    \foreach \u/\v in \pairs {
        \draw[red] (m-\i-\u.north west) rectangle (m-\i-\v.south east);
    }
    }
    
    \end{tikzpicture}
    \caption{Illustration of the algorithm for $\anyConsensus{r,\dS}$ over length-18 input strings  $\{s_1,s_2,s_3\}$, with $d=4$.
    } 
    \label{fig:dS-algorithm}
\end{figure}
We present our algorithm for distance $\dS$, i.e., for the swap distance only. See Figure~\ref{fig:dS-algorithm} for an illustration.

We first consider the case where all strings are pairwise matching (corresponding to steps 5-7 in the figure). In this case, we show that each of the studied consensus problems under the swap distance reduces to the same problem under the Hamming distance.%

Given a set of strings $\Scal=\{s_1, \dd , s_k\}$ all of length $n$, we call $\mathcal{C}_\Scal$ the set of strings from $\Sigma^n$ that are matching each and all strings in $\Scal$. We use the following result:

\begin{restatable}{lemma}{hamtosw}\label{lem:ham-to-sw}
Given a set of pairwise matching strings $\Scal=\{s_1, \dd , s_k\}$, for any string  $s^*\in \mathcal{C}_\Scal$, and any $s_i\in\Scal$, we have $\dS(s^*,s_i)=\dH(\swapstr{s_1, s^*}, \swapstr{s_1, s_i})$.
\end{restatable}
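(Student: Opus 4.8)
The plan is to express both sides as the weight (number of $\str{1}$s) of a single binary string obtained as a pointwise XOR, and then invoke the three-way matching structure via Lemma~\ref{lem:three-way-match}. The crucial observation is that the three strings $s_1$, $s^*$, and $s_i$ are \emph{pairwise} matching: $s_1$ and $s_i$ match by the hypothesis on $\Scal$, while $s^*$ matches both $s_1$ and $s_i$ precisely because $s^* \in \mathcal{C}_\Scal$. This pairwise matching is exactly what makes Lemma~\ref{lem:three-way-match} applicable.

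First I would record the symmetry $\swapstr{s_1, s^*} = \swapstr{s^*, s_1}$. This holds because each swap is an involution, so the unique valid swap permutation transforming $s_1$ into $s^*$ (unique by Lemma~\ref{lem:unicity}) is the very same one transforming $s^*$ back into $s_1$, and hence has the same underlying swap string. Next I would set $h = \swapstr{s^*, s_1} \oplus \swapstr{s_1, s_i}$ and apply Lemma~\ref{lem:three-way-match} to the triple $(s^*, s_1, s_i)$, with $s_1$ in the role of the middle string (the lemma's $s_2$). Since $s^*$ and $s_i$ are matching, the second case of the lemma --- in which the two endpoints fail to match --- cannot occur; hence $h$ contains no two consecutive $\str{1}$s, the first case applies, and we obtain $h = \swapstr{s^*, s_i}$.

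Finally I would translate this identity into the claimed distance equality. By definition $\dS(s^*, s_i)$ is the size of $\perm{s^*, s_i}$, which equals the number of $\str{1}$s in $\swapstr{s^*, s_i}$, i.e.\ $|h|_{\str{1}}$. On the other hand, the Hamming distance of two binary strings counts the positions where they differ, that is, the number of $\str{1}$s in their XOR; thus $\dH(\swapstr{s_1, s^*}, \swapstr{s_1, s_i})$ equals the weight of $\swapstr{s_1, s^*} \oplus \swapstr{s_1, s_i}$, which by the symmetry of the first step is again $|h|_{\str{1}}$. Both sides therefore coincide with $|h|_{\str{1}}$, giving the result.

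The main obstacle is the legitimate invocation of Lemma~\ref{lem:three-way-match}: one must verify that the triple is pairwise matching so that the lemma applies at all, and then use the \emph{known} matching of $s^*$ and $s_i$ to exclude the two-consecutive-$\str{1}$s branch, forcing us into the clean case where the XOR of the two swap strings is exactly $\swapstr{s^*, s_i}$. Once that branch is secured, the remainder is bookkeeping relating swap strings, XOR, and Hamming weight.
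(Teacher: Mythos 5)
Your proof is correct and follows essentially the same route as the paper's: both reduce each side to the weight of $\swapstr{s^*,s_1}\oplus \swapstr{s_1,s_i}$ using the symmetry $h_{u,v}=h_{v,u}$ and the identities $\dS(u,v)=|\swapstr{u,v}|_{\str{1}}$, $\dH(h_1,h_2)=|h_1\oplus h_2|_{\str{1}}$, and then invoke Lemma~\ref{lem:three-way-match} on the pairwise-matching triple with $s_1$ as the middle string. If anything, you are slightly more explicit than the paper in ruling out the two-consecutive-$\str{1}$s branch via the known matching of $s^*$ and $s_i$, which the paper leaves implicit.
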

\begin{proof}
One can observe that for any pair of binary strings $h_1,h_2$, one has $\dH(h_1,h_2)=|h_1\oplus h_2|_1$, since $(h_1\oplus h_2[p])=\str{1}$ if and only if $h_1[p]\neq h_2[p]$. One also observes that, since the swap strings have ones exactly on the swap positions, we have $\dS(s_1,s_2)=|\swapstr{s_1,s_2}|_{\str{1}}$. Finally, since swaps are involutions, we also have $h_{u,v}=h_{v,u}$ for any pair of matching strings $u,v$. 

The result is now immediate: if $s_1,s^*,s_i$ are pairwise matching, Lemma~\ref{lem:three-way-match}, we obtain the following: $\dS(s^*,s_i)=|\swapstr{s^*,s_i}|_1=|\swapstr{s^*,s_1}\oplus \swapstr{s_1,s_i}|_1=\dH(\swapstr{s_1, s^*}, \swapstr{s_1, s_i})$.
\qed \end{proof}

\begin{corollary}\label{cor:pairwise-reduction}
Given a set of pairwise matching strings $\mathcal{S}=\{s_1, \dd, s_k\}$ all having length $n$, one can compute in linear time a set $\mathcal{S}_h=\{\swapstr{s_1, s_1} \dd \swapstr{s_1,s_k}\}$ of binary strings, all having length $n$, such that $\mathcal{S}$ admits a \rSconsensus (resp.~$\sSconsensus$, resp.~$\rsSconsensus$) for $d$ (resp.~for $D$, resp.~for $(d,D)$) if and only if $\mathcal{S}_h$ admits a $\anyConsensus{r,\dH}$ (resp.~$\anyConsensus{s,\dH}$, resp.~$\anyConsensus{rs,\dH}$) for $d$ (resp.~for $D$, resp. for $(d,D)$).
\end{corollary}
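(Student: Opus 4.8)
The plan is to build a distance-preserving correspondence between candidate consensus strings for $\Scal$ under $\dS$ and candidate consensus binary strings for $\Scal_h$ under $\dH$, using Lemma~\ref{lem:ham-to-sw} as the engine, and then check that the three objectives (radius, sum, radius+sum) are transported identically. The forward direction is the easy one. If $s^*$ has finite swap distance to every $s_i$ then it matches all of them, so $s^*\in\mathcal{C}_\Scal$; setting $h^*=\swapstr{s_1,s^*}$, Lemma~\ref{lem:ham-to-sw} gives $\dH(h^*,\swapstr{s_1,s_i})=\dS(s^*,s_i)$ for every $i$. Hence $h^*$ realizes exactly the same radius $\max_i$ and the same sum $\sum_i$ over $\Scal_h$, which settles the ``only if'' parts of all three equivalences at once.

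The backward direction is where the real work lies, so first I would isolate a structural property of $\Scal_h$ coming from pairwise matching. Call a column $p$ \emph{active} if $\swapstr{s_1,s_i}[p]=\str{1}$ for some $i$. The key claim is that active columns are pairwise non-adjacent: no two consecutive columns can both be active. Indeed, suppose $\swapstr{s_1,s_i}[p]=\str{1}$ and $\swapstr{s_1,s_j}[p+1]=\str{1}$. Since each swap string has no two consecutive $\str{1}$s, we get $\swapstr{s_1,s_i}[p+1]=\str{0}$, $\swapstr{s_1,s_j}[p]=\str{0}$, and $i\neq j$; therefore $\swapstr{s_1,s_i}\oplus\swapstr{s_1,s_j}$ carries $\str{11}$ at positions $(p,p+1)$. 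Applying Lemma~\ref{lem:three-way-match} to the triple $s_i,s_1,s_j$ (both consecutive pairs match), this would force $s_i$ and $s_j$ not to match, contradicting the pairwise-matching hypothesis.

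With this in hand, given any $\dH$-consensus $h^*$ for $\Scal_h$ I would first zero out $h^*$ on every inactive column; because all $\swapstr{s_1,s_i}$ vanish there, this can only remove mismatches, so neither the radius nor the sum increases and the result is still a $\dH$-consensus. After this cleanup every $\str{1}$ of $h^*$ lies on an active column, so $h^*$ has no two consecutive $\str{1}$s, and since an active column $p$ carries a genuine swap of $s_1$ (so $s_1[p]\neq s_1[p+1]$) the compatible swaps encoded by $h^*$ act on distinct letters; applying them to $s_1$ yields $s^*$ with $\swapstr{s_1,s^*}=h^*$. Moreover $h^*\oplus\swapstr{s_1,s_i}$ is supported on active, hence non-adjacent, columns, so by Lemma~\ref{lem:three-way-match} $s^*$ matches every $s_i$, i.e.\ $s^*\in\mathcal{C}_\Scal$, and $\dS(s^*,s_i)=\dH(h^*,\swapstr{s_1,s_i})$ by Lemma~\ref{lem:ham-to-sw}. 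Thus $s^*$ attains the same radius and sum as $h^*$, giving the ``if'' parts. For the complexity claim, $\Scal_h$ is obtained by computing the $k$ swap strings $\swapstr{s_1,s_i}$, each in linear time by Lemma~\ref{lem:unicity}, hence linear time overall.

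The main obstacle is precisely this backward direction: an arbitrary Hamming consensus need not be $\swapstr{s_1,s^*}$ for any $s^*$ matching all of $\Scal$, and the non-adjacency of active columns is exactly what rescues it (it is the combinatorial content that breaks for general, non-pairwise-matching inputs, which is why the full algorithm must first disentangle the tangled intervals). A secondary point I would take care to state explicitly is the well-definedness of recovering $s^*$ from $h^*$: one must verify the encoded swaps are on distinct letters, so that $\swapstr{s_1,s^*}$ is genuinely $h^*$ and not a strict sub-pattern of it.
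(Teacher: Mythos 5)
Your proof is correct, and it does strictly more than the paper itself: the paper states this corollary without any proof, presenting it as an immediate consequence of Lemma~\ref{lem:ham-to-sw}. That lemma, however, only yields your ``easy'' forward direction (a swap consensus $s^*\in\mathcal{C}_\Scal$ maps to the Hamming consensus $\swapstr{s_1,s^*}$ with identical distances). You correctly identified that the converse is not covered: an arbitrary Hamming consensus $h^*$ of $\Scal_h$ need not a priori be of the form $\swapstr{s_1,s^*}$ for any $s^*$ matching all of $\Scal$, so the reduction's completeness needs an argument. Your two-step repair is exactly the right content: (i) the non-adjacency of active columns, proved via Lemma~\ref{lem:three-way-match} applied to the triple $s_i,s_1,s_j$ (using symmetry $\swapstr{s_i,s_1}=\swapstr{s_1,s_i}$ and the fact that no valid swap string contains $\str{11}$), which crucially exploits the pairwise-matching hypothesis; and (ii) zeroing $h^*$ on inactive columns (which cannot increase any distance, since every $\swapstr{s_1,s_i}$ vanishes there) so that the surviving $\str{1}$s form a compatible, letter-distinct set of swaps that lifts to a genuine $s^*$ with $\swapstr{s_1,s^*}=h^*$, after which Lemma~\ref{lem:three-way-match} gives $s^*\in\mathcal{C}_\Scal$ and Lemma~\ref{lem:ham-to-sw} transports radius and sum unchanged. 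Your closing observation is also apt: this non-adjacency is precisely what fails for non-pairwise-matching inputs, which is why the paper's general algorithm must first compute a disentanglement (Lemma~\ref{lem:disentanglement}). One cosmetic remark: the swap strings actually have length $n-1$, not $n$ as the statement claims, but this typo originates in the paper and is immaterial to the argument.
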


The latter result gives us a parameterized reduction to Hamming distance for pairwise matching strings. 

We consider now the more general case where input strings are not pairwise matching. Note that this does not trivially lead to no-instances, since only the solution needs to match with all input strings (not necessarily other input strings). However, a necessary condition for a yes-instance is that $\mathcal{C}_\Scal\neq\emptyset$.

We show a two‐step method for solving swap‐distance consensus problems. First, in linear time we transform $\Scal = \{s_1, \dots, s_k\}$ into $\Scal' = \{s'_1, \dots, s'_k\}$ by a minimal set of swaps so that the $s'_i$ are pairwise matching. Then, by Corollary~\ref{cor:pairwise-reduction}, we reduce $\Scal'$ to a variant of the consensus problem under Hamming distance, and prove that this variant admits a polynomial reduction to the standard Hamming‐distance consensus problem.

Let $\Scal=\{s_1, \dd , s_k\}$. If for some $s\in \mathcal{S}$ one has $(p,p+1)\in\perm{s^*,s}$ for every $s^*\in\mathcal{C}_\Scal$, we say that the swap $(p,p+1)$ is \emph{necessary} for $s$. We say that a set $\Scal'=\{s'_1, \dd, s'_k\}$ is a \emph{disentanglement} of $\Scal$ if \textbf{(i)} the strings from $\Scal'$ are pairwise matching, and \textbf{(ii)} for each $i$, the strings $s_i$ and $s_i'$ are matching and the permutation $\perm{s_i,s_i'}$ contains only necessary swaps for $s_i$. %

\begin{lemma}\label{lem:disentanglement}
Let $\Scal=\{s_1, \dd , s_k\}$ be a set of strings such that $\mathcal{C}_\Scal\neq \emptyset$. Then, $\Scal$ admits a disentanglement, which can be constructed in $\cO(kn)$ time.
\end{lemma}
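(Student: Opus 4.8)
The plan is to exploit that the \emph{necessary} swaps are an invariant of $\Scal$ (they do not depend on which $s^*\in\mathcal{C}_\Scal$ we look at), to locate them by a single left-to-right sweep, and to define $s_i'$ by applying exactly these swaps to $s_i$. The whole construction splits the positions into "free'' zones, where the inputs are already pairwise matching and nothing is changed, and maximal \emph{tangled} intervals (as in Figure~\ref{fig:dS-algorithm}), where the common matching value is forced and each $s_i$ must be rearranged to it.

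First I would pin down where swaps are forced. Fix any $s^*\in\mathcal{C}_\Scal$, which exists by hypothesis and matches every input string. For a pair $s_i,s_j$, apply Lemma~\ref{lem:three-way-match} to the triple $(s_i,s^*,s_j)$ with $h=\swapstr{s_i,s^*}\oplus\swapstr{s^*,s_j}$: if $h$ has no two consecutive $\str{1}$s then $s_i,s_j$ already match, and otherwise, at each pair of consecutive $\str{1}$s at positions $(p-1,p)$ the lemma forces $s^*[p-1\dd p+1]$ to a value that equals $t[p-1\dd p+1]$ for \emph{every} $t$ matching both $s_i$ and $s_j$. Since any other element of $\mathcal{C}_\Scal$ is such a $t$, the forced positions and their letters are intrinsic to $\Scal$, and the forced value together with $s_i$ determines the swaps of $s^*$ with $s_i$ around $p$; these are precisely the necessary swaps for $s_i$. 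Conversely, outside the forced positions all inputs are pairwise matching and $s^*$ retains freedom, so no swap there is necessary.

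Next I would make this constructive in linear time. The forced positions group into maximal tangled intervals $[a\dd b]$ on which the matching value is unique and equal to some forced string $f[a\dd b]$, while outside them the inputs are pairwise matching. I would compute these intervals and the forced letters in one pass from left to right, in the spirit of Lemma~\ref{lem:unicity}, using the prefix/suffix letter-count invariants of Remark~\ref{obs:invariant} to decide at each position whether a swap must cross it and which letter is forced. I then set $s_i'$ to agree with $s_i$ on every free zone and to equal $f[a\dd b]$ on each tangled interval. Because $f[a\dd b]$ is a swap-rearrangement of $s_i[a\dd b]$ realized by swaps confined to $[a\dd b]$, the string $s_i'$ is obtained from $s_i$ by necessary swaps only, giving property (ii) of a disentanglement, and the sweep runs in $\cO(kn)$ total time.

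It remains to prove property (i), that the $s_i'$ are pairwise matching. On free zones each $s_i'$ equals $s_i$, which match pairwise there, and on tangled intervals all $s_i'$ equal the same $f$; the global swap permutation decomposes into these zones provided no swap straddles an interval endpoint, in which case zone-wise matching lifts to a genuine matching of the full strings. The main obstacle is exactly this boundary bookkeeping: I must show that the tangled intervals are intrinsic and \emph{self-contained}, i.e.\ that a necessary swap never crosses an endpoint $a$ or $b$, so that forcing the interior to $f$ neither clashes with the free exterior nor introduces any non-necessary swap, and that a single left-to-right pass can detect these endpoints and forced letters consistently for all $k$ strings simultaneously. Lemma~\ref{lem:three-way-match} supplies the local forcing; the delicate work is lifting it from a single pair to the whole set and to a global sweep via the count invariants of Remark~\ref{obs:invariant}.
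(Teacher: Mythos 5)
Your plan is essentially the paper's: use Lemma~\ref{lem:three-way-match} to show that all strings of $\mathcal{C}_\Scal$ agree on the tangled regions, conclude that the swaps bringing each $s_i$ to this common value are exactly the necessary ones, and realize them by a left-to-right sweep. The genuine gap --- which you flag yourself but do not close --- lies at the constructive core. Your identification of the forced positions is carried out relative to a fixed $s^*\in\mathcal{C}_\Scal$, via the swap strings $\swapstr{s_i,s^*}$; but the algorithm has no access to $s^*$, only to $\Scal$. The whole difficulty of the lemma is to detect the tangled intervals and the forced letters from the columns of $\Scal$ alone. You assert this can be done ``in one pass \dots\ using the prefix/suffix letter-count invariants of Remark~\ref{obs:invariant}'', but you never state the actual test, and you then concede that ``the delicate work is lifting it from a single pair to the whole set and to a global sweep.'' That delicate work \emph{is} the proof. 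The paper supplies the missing, $s^*$-free criterion: scan for the first dirty column $i_0$ that cannot be grouped with column $i_0+1$ into a pattern $\Scal[i_0,i_0+1]=\{\str{ab},\str{ba}\}$, i.e., such that some $s_j$ has $s_j[i_0+1]\notin\Scal[i_0]$ or $s_j[i_0]=s_j[i_0+1]$; one then shows $\Scal[i_0,i_0+1]$ is either $\{\str{ba},\str{ac}\}$ or $\{\str{ba},\str{aa}\}$, which pins down $s^*[i_0+1,i_0+2]$ for \emph{every} $s^*\in\mathcal{C}_\Scal$, identifies exactly which strings need a swap at $i_0$, and, after applying those swaps, propagates the knowledge of $s^*$ one position at a time until the tangled interval closes. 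Without such an explicit column-wise test, your sweep is not an algorithm, and neither correctness nor the $\cO(kn)$ bound can be established.

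By contrast, the boundary issue you single out as the main obstacle is the smaller of the two holes: by maximality of a tangled interval, no input string has a swap with $s^*$ at the positions just outside it, so every swap of $\perm{s_i,s^*}$ is confined to the interval, and the paper's progressive resolution (keep resolving positions as long as some string mismatches the currently deduced letter of $s^*$) handles the endpoints automatically. What your write-up essentially establishes is the \emph{existence} of a disentanglement, which already follows from the agreement of $\mathcal{C}_\Scal$ on tangled intervals; what is missing is its construction in $\cO(kn)$ time, which is the algorithmic content of the lemma.
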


\begin{proof}

Assume $\mathcal{C}_\Scal \neq \emptyset$, and let $s^* \in \mathcal{C}_\Scal$. Although we do not have access to $s^*$, its existence provides structural information about $\Scal$.

An interval $I \subseteq [1 \dd n]$ of length at least $2$ is \emph{tangled} if for every $i \in I$, there exists $s_j \in \Scal$ such that $h_{s_j, s^*}[i] = \str{1}$. If no tangled interval exists, then by Lemma~\ref{lem:three-way-match} all strings in $\Scal$ are pairwise matching and there are no necessary swaps.

If $I$ is tangled, Lemma~\ref{lem:three-way-match} implies that all strings in $\mathcal{C}_\Scal$ agree on $I$. Therefore, any difference between $s_j \in \Scal$ and $s^*$ within $I$ corresponds to a necessary swap. Importantly, this implies that the set of tangled intervals is independent of the choice of $s^* \in \mathcal{C}_\Scal$.

Assume a tangled interval exists, and let $i_0$ be the initial position of the first one. By definition, there exists $s_j \in \Scal$ with $h_{s_j, s^*}[i_0] = \str{1}$, hence $s^*[i_0] \neq s^*[i_0+1]$. Similarly, $s^*[i_0+1] \neq s^*[i_0+2]$. Let $s^*[i_0,i_0+1] = \str{ab}$. Furthermore, no $s_j\in\Scal$ admits a swap at position $i_0-1$ with $s^*$. We consider two cases:
\begin{itemize}
    \item If $s^*[i_0+2] = \str{c} \neq \str{a}$, then $\Scal[i_0,i_0+1] = \{\str{ba}, \str{ac}\}$.
    \item If $s^*[i_0+2] = \str{a}$, then $\Scal[i_0,i_0+1] = \{\str{ba}, \str{aa}\}$.
\end{itemize}
In either case, a necessary swap at position $i_0+1$ occurs for $s_j \in \Scal$ if and only if $s_j[i_0+1] \notin \Scal[i_0]$ or $s_j[i_0] = s_j[i_0+1]$. In particular, for such $s_j$, one has $s'[i_0+1,i_0+2]=s^*[i_0+1,i_0+2]=s_j[i_0+2,i_0+1]$ for every $s'\in \mathcal{C}_\Scal$.

We now describe how to compute $\Scal'$ (see Figure~\ref{ex:disentanglement} for an example). First, we can detect whether a tangled interval exists by checking if the set of all dirty columns can be grouped in pairs $i,i+1$ such that $\Scal[i,i+1] = \{\str{ab}, \str{ba}\}$ for distinct letters $\str{a}, \str{b}$. If this holds for all dirty columns, then there is no tangled interval and we set $\Scal' = \Scal$.

Otherwise, let $i_0$ be the first column violating this condition. Namely, the column $i_0$ is dirty and there is at least one $s_j\in\Scal$ with $s_j[i_0+1]\not\in \Scal[i_0]$ or $s_j[i_0+1]=s_j[i_0]$. 
Then, $i_0$ is the initial position of a tangled interval. Based on the earlier characterization, we can deduce $s^*[i_0+1,i_0+2]$ for any $s^*\in\mathcal{C}$, hence all necessary swaps at $i_0$: namely, we swap $s_j$ at $i_0$ if and only if $s_j[i_0,i_0+1]=s^*[i_0+1,i_0+2]$. After applying these swaps, we identify the strings requiring swaps at $i_0$—namely, those not matching at $i_0+1$ with the updated strings. At this step, we computed $s^*[i_0,i_0+1,i_0+2]$ and every necessary swap at position $i_0$ and $i_0+1$. If, after those swaps, there are still strings that do not match with $s^*$ at position $i_0+2$, we know that these have a necessary swap at position $i_0+2$, and hence we also know $s^*[i_0+3]$. By continuing this procedure as long as unmatched strings remain—progressively resolving each next position—we eventually reach the end of the tangled interval, or the end of $\Scal$.  Let $\Tilde{\Scal}$ be the set obtained after performing each of those necessary swaps. Then $\Tilde{\Scal}$ has no tangled interval before position $i_0+2$.

This process takes $\cO(k)$ time, and by repeating it at most $\cO(n)$ time, we eventually construct $\Scal'$ with no tangled intervals.

\qed \end{proof}

\begin{figure}\label{ex:disentanglement}
    \centering
    
    \begin{tikzpicture}
    \matrix [matrix of nodes,
             every node/.style={font=\tt, 
        text height=1.5ex,
        text depth=.25ex,},
            row sep=0
            ] (m)
{
{}&\\[.1em]
&$s_1=$ &g&a&b&c&a&h&i\\
&$s_2=$ &g&c&a&a&b&i&h\\
&$s_3=$ &g&c&a&b&a&i&h\\[.1em]
{}&\\[.1em]
&$s^*=$ &?\\
&$s_1=$ &g&a&b&c&a&h&i\\
&$s_2=$ &g&c&a&a&b&i&h\\
&$s_3=$ &g&c&a&b&a&i&h\\[.1em]
{}&\\[.1em]
&$s^*=$ &?&?&c&b\\
&$s_1'=$ &g&a&c&b&a&h&i\\
&$s_2'=$ &g&c&a&a&b&i&h\\
&$s_3'=$ &g&c&a&b&a&i&h\\[.1em]
{}&\\[.1em]
&$s^*=$ &?&a&c&b\\
&$s_1''=$ &g&a&c&b&a&h&i\\
&$s_2''=$ &g&a&c&a&b&i&h\\
&$s_3''=$ &g&a&c&b&a&i&h\\[.1em]
{}&\\[.1em]
&$s^*=$ &?&a&c&b&a\\
&$s_1'''=$ &g&a&c&b&a&h&i\\
&$s_2'''=$ &g&a&c&b&a&i&h\\
&$s_3'''=$ &g&a&c&b&a&i&h\\[.1em]
{}&\\[.1em]
{}&\\[.1em]
&$s^*=$ &?&a&c&b&a&?&?\\
};
        
    \node[anchor=west] at (m-1-1.west) {Input strings};
    \draw[blue] (m-2-3.north west) rectangle (m-4-3.south east);
    \node[anchor=west] at (m-5-1.west) {Column $i_0=1$ is clean, so not part of a tangled interval};
    
    \draw[blue] (m-7-4.north west) rectangle (m-9-4.south east);
    \draw[red] (m-7-5.north west) rectangle (m-7-5.south east);
    \node[anchor=west] at (m-10-1.west) {Column $i_0=2$ is dirty with $s_1[3]\not\in\Scal[2]=\{\str{a},\str{c}\}$, $\rightarrow$ $s^*[3,4]=s_1[4,3]=\str{cb}$};

    \draw[red] (m-12-5.south west) rectangle (m-14-5.south east);
    \draw[blue] (m-11-5.north west) rectangle (m-12-5.south east);
    \node[anchor=west] at (m-15-1.west) {Column $i_0=3$: $s_2'[3]=s_3'[3]=\str{a}\neq s^*[3]$ $\rightarrow s^*[2]=\str{a}$.};
    
    \draw[blue] (m-16-6.north west) rectangle (m-19-6.south east);
    \draw[red] (m-18-6.north west) rectangle (m-18-6.south east);
    \node[anchor=west] at (m-20-1.west) {Column $i_0=4$: $s_2''[4]=\str{a}\neq s^*[4]$ $\rightarrow$ $s^*[5]=\str{a}$.};

    \draw[blue] (m-21-7.north west) rectangle (m-24-7.south east);
    \node[anchor=west] at (m-25-1.west) {Column $i_0=5$: end of tangled interval};
    \node[anchor=west] at (m-26-1.west) {Columns $i_0=6$, $i_0=7$: no new tangled interval};
    \end{tikzpicture}
    \caption{Step-by-step disentanglement algorithm. We start with $\Scal=\{s_1,s_2,s_3\}$. Column $2$ is the first dirty column and we have $s_1[3]\not\in\Scal[2]=\{\str{a},\str{c}\}$. This means that there is a necessary swap at position $3$ for $s_1$, and in particular $s^*[3,4]=s_1[4,3]$ for \emph{every} $s^*\in\mathcal{C}_\Scal$. After swapping $s_1$ at position $3$, we obtain a new set $s'_1,s'_2,s'_3$, and we deduce the necessary swaps at position $2$, as the ones that do not match $s^*[2]$: namely, $s_2$ and $s_3$, and we deduce $s^*[2]=\str{a}$. After swapping $s'_2$ and $s'_3$ we obtain $\Scal''=s''_1,s''_2,s''_3$. We now resolve necessary swaps at position $4$, since $s^*[4]=\str{b}$ is known and $s''_2[4]=\str{a}\neq \str{b}$. We finally obtain the last set $\Scal'''=\{s'''_1,s'''_2,s'''_3\}$, and we see that every string matches $s^*$ at the new computed position $s^*[5]=\str{a}$. This means that we reached the end of the first tangled interval. Finally, we notice that the last set does not contain any other tangled interval, so it is a disentanglement for $\Scal$.
    } 
    \label{fig:disentanglement-algorithm}
\end{figure}%

The following result is deduced from Lemma~\ref{lem:disentanglement}:

\begin{corollary}\label{cor:s-swap}
\sSconsensus can be solved in $\cO(kn)$ time.
\end{corollary}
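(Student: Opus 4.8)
The plan is to reduce the swap-distance sum problem to ordinary Hamming sum-consensus (solvable by a column-wise majority vote in linear time) after a preprocessing step that strips away the ``forced'' part of every solution. First I would note that the objective is finite only for candidates in $\mathcal{C}_\Scal$: if $s^*$ fails to match some input $s_i$ then $\dS(s_i,s^*)=+\infty$ and the sum exceeds any finite $D$. So the task is to minimize $\sum_i\dS(s_i,s^*)$ over $s^*\in\mathcal{C}_\Scal$, and the instance is a trivial \str{no} whenever $\mathcal{C}_\Scal=\emptyset$. I would run the procedure of Lemma~\ref{lem:disentanglement}, which in $\cO(kn)$ time either certifies $\mathcal{C}_\Scal=\emptyset$ (answer \str{no}) or outputs a disentanglement $\Scal'=\{s_1',\dots,s_k'\}$.

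Next I would split the objective along the disentanglement. By definition every swap in $\perm{s_i,s_i'}$ is necessary, hence belongs to $\perm{s_i,s^*}$ for every $s^*\in\mathcal{C}_\Scal$; since a swap string never has two consecutive $\str{1}$s, Lemma~\ref{lem:three-way-match} gives $\swapstr{s_i',s^*}=\swapstr{s_i,s^*}\oplus\swapstr{s_i,s_i'}$ and therefore $\dS(s_i,s^*)=\dS(s_i,s_i')+\dS(s_i',s^*)$, while the structural analysis below yields $\mathcal{C}_\Scal=\mathcal{C}_{\Scal'}$. Writing $C=\sum_i\dS(s_i,s_i')$, a constant computed in $\cO(kn)$, the minimum objective is $C+\min_{s^*\in\mathcal{C}_{\Scal'}}\sum_i\dS(s_i',s^*)$.

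Then, since $\Scal'$ is pairwise matching, I would finish via Corollary~\ref{cor:pairwise-reduction}: setting $h_i=\swapstr{s_1',s_i'}$ and $h^*=\swapstr{s_1',s^*}$, Lemma~\ref{lem:ham-to-sw} rewrites $\sum_i\dS(s_i',s^*)$ as $\sum_i\dH(h^*,h_i)$, which is minimized column-wise by the majority string in $\cO(kn)$. Adding back $C$ and comparing with $D$ decides the instance, and the whole computation runs in $\cO(kn)$.

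The hard part will be realizability: a priori the Hamming optimum $h^*$ is an arbitrary binary string, whereas a legitimate solution must equal $\swapstr{s_1',s^*}$ for some $s^*\in\mathcal{C}_{\Scal'}$. I would argue this is automatic from the disentanglement. After all tangled intervals have been removed, the dirty columns of $\Scal'$ group into pairwise-disjoint, non-adjacent pairs with $\Scal'[i,i+1]=\{\str{ab},\str{ba}\}$ for distinct $\str{a},\str{b}$, so every $h_i$ is supported on one common set of pairwise non-adjacent positions. Consequently any column-wise choice of $h^*$—in particular the majority—has no two consecutive $\str{1}$s and induces, independently at each such pair, a valid swap of distinct letters; the resulting $s^*$ matches every $s_i'$ and therefore lies in $\mathcal{C}_{\Scal'}=\mathcal{C}_\Scal$. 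This same isolation of swap positions is what makes the relaxation to unconstrained Hamming consensus tight and also underlies the identity $\mathcal{C}_\Scal=\mathcal{C}_{\Scal'}$ used above, so it is the one place where the no-tangled-interval property of the disentanglement is genuinely needed.
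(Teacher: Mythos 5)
Your skeleton is the same as the paper's: run the disentanglement of Lemma~\ref{lem:disentanglement}, charge the cost $C=\Delta$ of the necessary swaps, reduce the now pairwise-matching set to \anyConsensus{s,\dH} on swap strings via Corollary~\ref{cor:pairwise-reduction} (i.e., Lemma~\ref{lem:ham-to-sw}), and take the column-wise majority. Your extra effort on realizability addresses exactly what the paper leaves implicit. However, the identity $\mathcal{C}_\Scal=\mathcal{C}_{\Scal'}$ that you invoke twice (to rewrite the minimum over $\mathcal{C}_{\Scal'}$, and to place your constructed string back in $\mathcal{C}_\Scal$) is false: only the inclusion $\mathcal{C}_\Scal\subseteq\mathcal{C}_{\Scal'}$ holds. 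Disentanglement turns the tangled columns into clean ones, and a string of $\mathcal{C}_{\Scal'}$ may swap inside that region even though no string of $\mathcal{C}_\Scal$ can. The paper's own example (Figure~\ref{fig:disentanglement-algorithm}) is a counterexample: $\Scal=\{\str{gabcahi},\str{gcaabih},\str{gcabaih}\}$ has disentanglement $\Scal'=\{\str{gacbahi},\str{gacbaih},\str{gacbaih}\}$, and the string $t=\str{agcbahi}$ matches every string of $\Scal'$ (a swap at $(1,2)$, plus one at $(6,7)$ where needed), so $t\in\mathcal{C}_{\Scal'}$; yet $t$ does not match $s_2=\str{gcaabih}$, since the mismatch $t[1]=\str{a}\neq\str{g}=s_2[1]$ cannot be repaired by a swap ($s_2[2]=\str{c}\neq\str{a}$), so $t\notin\mathcal{C}_\Scal$.

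The gap is repairable without changing your plan, because you never need the full identity: it suffices that the \emph{particular} string $s^*$ you build from the majority lies in $\mathcal{C}_\Scal$ and satisfies the cost decomposition there. Your $h^*$ is supported on the dirty pairs of $\Scal'$, so each $\swapstr{s_i',s^*}$ is supported there as well; the necessary swaps $\perm{s_i,s_i'}$, on the other hand, sit at columns that are clean in $\Scal'$, so the two supports involve disjoint column pairs and hence positions at distance at least $2$ from one another. Thus $\swapstr{s_i,s_i'}\oplus\swapstr{s_i',s^*}$ has no two consecutive $\str{1}$s, and Lemma~\ref{lem:three-way-match} gives that $s^*$ matches $s_i$ with $\dS(s_i,s^*)=\dS(s_i,s_i')+\dS(s_i',s^*)$; combined with the inclusion $\mathcal{C}_\Scal\subseteq\mathcal{C}_{\Scal'}$ and the fact that the majority minimizes $\sum_i\dH(\cdot,h_i)$ over \emph{all} binary strings, this closes both inequalities and recovers your formula $C+\min_{h^*}\sum_i\dH(h^*,h_i)$. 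One further gloss in your realizability paragraph: dirty pairs need not be non-adjacent (take $\str{abcd}$ versus $\str{badc}$, with pairs $(1,2)$ and $(3,4)$). What is true, and what your argument needs, is that every $h_i$ has its $\str{1}$s only at the \emph{first} position of a pair; this relies on the uniqueness of valid swap permutations (a swap straddling two adjacent pairs would recreate the situation of Example~\ref{bad swap}), and since distinct pairs are disjoint, those first positions are automatically at distance at least $2$ apart.
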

\begin{proof}
From Lemma~\ref{lem:disentanglement}, we can perform every necessary swap in $\cO(kn)$. Let $\Delta$ be the number of swaps performed in total at this step. The set obtained contains only pairwise matching strings, hence, from Corollary~\ref{cor:pairwise-reduction} we can reduce the problem to $\anyConsensus{s,\dH}$ in $\cO(kn)$: the set $\Scal$ admits a \sSconsensus for $D$ if and only if its disentanglement admits a $\anyConsensus{s,\dH}$ for at most $D-\Delta$. The result then follows since \anyConsensus{s,\dH} can be solved in linear time~\cite{GRA}.
\qed \end{proof}

For $\rSconsensus$ and $\rsSconsensus$, one needs an additional result in order to use Corollary~\ref{cor:pairwise-reduction} and reduce the problems to their Hamming versions. Indeed, the strings in the disentanglement have been swapped a different number of times, and we need to consider this when computing the consensus. This can be formalized as follows:

\probdef{\rHConsensusM}{A set of $k$ strings $\Scal=\{s_1, \dd , s_k\}$ each of length $n$, integer~$d$ and for each $s\in \Scal$, an integer $x_s\le d$.}{Is there a length-$n$ string~$s^*$ such that $\forall~s\in \Scal,~\displaystyle  \dH(s,s^*)\leq d-x_s~$?}

\probdef{\rsHConsensusM}{A set of $k$ strings $\Scal=\{s_1, \dd , s_k\}$ each of length $n$, integers~$d,D$ and for each $s\in \Scal$, an integer $x_s\le d$.}{Is there a length-$n$ string~$s^*$ such that $\forall~s\in \Scal,~\displaystyle  \dH(s,s^*)\leq d-x_s~$ and $\displaystyle \sum_{s\in \Scal} \partial(s,s^*)\leq D-\sum_{s\in\Scal}x_s$?}

The following is a direct consequence of Lemma~\ref{lem:disentanglement} and Corollary~\ref{cor:pairwise-reduction} (where \emph{parameterized reduction} refer to the parameters $d$, $k$, $n$ and, when applicable, $D$):

\begin{corollary}
The problem $\rSconsensus$ (resp. $\rsSconsensus$) admits a parameterized reduction 
to $\rHConsensusM$ (resp.~to $\rsHConsensusM$). %
\end{corollary}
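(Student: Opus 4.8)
The plan is to run the disentanglement of Lemma~\ref{lem:disentanglement} and then feed the resulting pairwise matching family into Corollary~\ref{cor:pairwise-reduction}, using the per‑string integers $x_s$ of the \textsc{Mixed} problems to account for the swaps that disentanglement forced. Given an instance $(\Scal,d)$ of $\rSconsensus$ (resp.\ $(\Scal,d,D)$ of $\rsSconsensus$), I first note that a solution must lie in $\mathcal{C}_\Scal$, so if $\mathcal{C}_\Scal=\emptyset$ (which the disentanglement procedure detects) I output a fixed no‑instance. Otherwise I compute, in $\cO(kn)$ time, a disentanglement $\Scal'=\{s_1',\dots,s_k'\}$ and set $x_i:=\dS(s_i,s_i')$, the number of necessary swaps applied to $s_i$. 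If some $x_i>d$ then $\dS(s^*,s_i)\ge x_i>d$ for every candidate $s^*$, so I again output a fixed no‑instance (this also guarantees the required $x_s\le d$). Finally I invoke Corollary~\ref{cor:pairwise-reduction} on $\Scal'$ to obtain binary strings $h_i=\swapstr{s_1',s_i'}$ and output the $\rHConsensusM$ instance $(\{h_1,\dots,h_k\},d,(x_i)_i)$ (resp.\ the $\rsHConsensusM$ instance with the extra bound $D$).

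The crux is an additivity identity for $s^*\in\mathcal{C}_\Scal$:
\[
\dS(s^*,s_i)\;=\;x_i+\dS(s^*,s_i').
\]
To prove it, recall that $\perm{s_i,s_i'}$ consists only of swaps necessary for $s_i$, hence of swaps lying in $\perm{s^*,s_i}$ for every $s^*\in\mathcal{C}_\Scal$. As all swaps of a valid permutation have pairwise‑disjoint support and therefore commute, applying these necessary swaps first carries $s_i$ to $s_i'$, and the remaining swaps of $\perm{s^*,s_i}$ carry $s_i'$ to $s^*$; since they avoid every position touched by a necessary swap, they still exchange distinct letters of $s_i'$ and thus form a valid swap permutation, which by Lemma~\ref{lem:unicity} must be $\perm{s_i',s^*}$. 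Counting swaps gives the identity, and combining it with Lemma~\ref{lem:ham-to-sw} (underlying Corollary~\ref{cor:pairwise-reduction}) yields $\dS(s^*,s_i)=x_i+\dH(\swapstr{s_1',s^*},h_i)$ for every $s^*\in\mathcal{C}_\Scal$.

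Correctness then follows by rewriting the constraints. The radius bound $\dS(s^*,s_i)\le d$ becomes $\dH(\swapstr{s_1',s^*},h_i)\le d-x_i$, exactly the constraint of $\rHConsensusM$ with offsets $x_{h_i}=x_i$; for the radius‑and‑sum variant, $\sum_i\dS(s^*,s_i)\le D$ becomes $\sum_i\dH(\swapstr{s_1',s^*},h_i)\le D-\sum_i x_i$, matching $\rsHConsensusM$. The forward implication is immediate from the identity (a swap solution lies in $\mathcal{C}_\Scal$ and induces a Hamming solution $\swapstr{s_1',s^*}$). For the converse I would use that every former tangled interval is \emph{clean} in $\Scal'$, so all $h_i$ vanish there; an optimal Hamming consensus can then be taken to vanish on these positions as well, and by Corollary~\ref{cor:pairwise-reduction} it corresponds to a string $s^*$ that does not swap inside tangled intervals, hence lies in $\mathcal{C}_\Scal$ and satisfies the additivity with equality. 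This converse, and in particular the argument that the recovered consensus genuinely matches all $s_i$ (rather than only all $s_i'$), is the step I expect to require the most care, since matching is not transitive and naively composing $\perm{s^*,s_i'}$ with $\perm{s_i',s_i}$ could create incompatible swaps. All parameters ($d$, $k$, the length $n-1$ of the swap strings, and $D$) are preserved up to constants, so this is a valid parameterized reduction.
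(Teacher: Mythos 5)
Your proposal is correct and follows essentially the same route as the paper, which states this corollary as a direct consequence of Lemma~\ref{lem:disentanglement} and Corollary~\ref{cor:pairwise-reduction}: disentangle, set the offsets $x_s$ to the number of necessary swaps, and reduce the pairwise-matching family to the Hamming \textsc{Mixed} problem. Your additivity identity $\dS(s^*,s_i)=x_i+\dS(s^*,s_i')$ and the observation that all $h_i$ vanish on former tangled intervals (so the Hamming consensus may be assumed to vanish there too, making the recovered string compatible with the necessary swaps and hence a member of $\mathcal{C}_\Scal$) correctly supply the details that the paper leaves implicit, including the converse direction you flagged as delicate.
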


Observe that $\rHConsensusM$ and $\rsHConsensusM$ are clearly in $\NP$, hence they can be reduced to $\anyConsensus{r,\dH}$ (resp.~$\anyConsensus{rs,\dH}$), that are $\NP$-complete, in polynomial time. In fact, we can even show the following:

\begin{restatable}{lemma}{mixed-to-hamming}\label{lem:mixed to hamming}
The problem $\rHConsensusM$ (resp.  $\rsHConsensusM$) admits a parameterized reduction to $\anyConsensus{r,\dH}$ (resp.~$\anyConsensus{rs,\dH}$).
\end{restatable}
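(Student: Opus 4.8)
The plan is to exhibit a single gadget that handles both versions at once and to prove its correctness by an exchange argument. Given an instance of $\rHConsensusM$ with strings $s_1,\dd,s_k$ of length $n$, radius $d$, and values $x_i:=x_{s_i}$, I would build a $\anyConsensus{r,\dH}$ instance as follows. Introduce two fresh symbols $\sharp,\flat\notin\Sigma$ and append to every string a block of $\sum_i x_i$ new columns, of which exactly $x_i$ are \emph{owned} by $s_i$; in a column owned by $s_i$, the string $s_i$ carries $\sharp$ and every other string carries $\flat$. The new instance keeps radius $d$ (and, for the sum version, the sum bound $D$). The new length is $n+\sum_i x_i\le n+kd$, the alphabet grows by two, and $k$, $d$ (and $D$) are unchanged, so this is a valid parameterized reduction; it remains to prove equivalence of the two instances.

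For the forward direction, given a solution $w$ of the original instance (so $\dH(s_i,w)\le d-x_i$ for all $i$, and $\sum_i\dH(s_i,w)\le D-\sum_i x_i$ in the sum case), I extend it to $w'$ by placing $\flat$ in every appended column. Then $s_i$ matches $w'$ on all columns owned by other strings and mismatches on each of its own $x_i$ columns, so $\dH(s_i,w')=\dH(s_i,w)+x_i\le d$; the total increases by exactly $\sum_i x_i$, giving sum $\le D$. Hence $w'$ solves the constructed instance.

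The hard part is the backward direction. Starting from a solution $w'$ of the constructed instance, I first note that on any appended column we may assume $w'$ takes value $\sharp$ or $\flat$, since replacing a foreign symbol by $\flat$ only decreases distances. Let $w$ be the restriction of $w'$ to the original $n$ columns, let $\delta_i=\dH(s_i,w)$, let $m_i$ be the number of columns owned by $s_i$ on which $w'=\sharp$, and let $M=\sum_i m_i$. A direct count gives $\dH(s_i,w')=\delta_i+x_i+M-2m_i\le d$. Call $s_i$ \emph{over-demanding} if $\delta_i>d-x_i$. The key observation is that an over-demanding string satisfies $\delta_i+x_i>d$, whence $2m_i>M$; as the $m_i$ are nonnegative and sum to $M$, \textbf{at most one} string can be over-demanding. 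If there is none, $w$ already solves the original instance. Otherwise let $s_{i^\star}$ be the unique over-demanding string, set $t=\delta_{i^\star}-(d-x_{i^\star})>0$, and flip $w$ towards $s_{i^\star}$ on $t$ positions where they differ (possible since $t\le\delta_{i^\star}$), obtaining $w''$ with $\dH(s_{i^\star},w'')=d-x_{i^\star}$.

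It then remains to check that $w''$ keeps every other string, and the sum, within budget. For $i\ne i^\star$ the padded constraint gives slack $(d-x_i)-\delta_i\ge M-2m_i$, while the constraint for $s_{i^\star}$ gives $t\le 2m_{i^\star}-M$; since $m_{i^\star}+m_i\le M$ we obtain $t\le 2m_{i^\star}-M\le M-2m_i$, so each flip is absorbed and $\dH(s_i,w'')\le\delta_i+t\le d-x_i$. For the $\rsHConsensusM$ version (reducing to $\anyConsensus{rs,\dH}$ with the same bound $D$), each flip raises the total by at most $k-2$, so the sum grows by at most $(k-2)t$; combined with the padded bound $\sum_i\delta_i\le D-\sum_i x_i-(k-2)M$ and the inequality $t\le 2m_{i^\star}-M\le M$, this yields $\sum_i\dH(s_i,w'')\le D-\sum_i x_i$. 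Thus $w''$ solves the original instance. The main obstacle, and the crux of the whole argument, is precisely this backward step: recognizing that cheating on the gadget (i.e.\ setting $w'=\sharp$ on owned columns) can benefit at most one string, and that the slack forced on the remaining strings — and on the sum — is always exactly enough to repair the single over-demanding string by moving $w$ toward it.
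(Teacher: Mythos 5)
Your reduction is correct for $k\ge 2$ and takes a genuinely different route from the paper's. The paper's gadget \emph{duplicates} each string: $s$ gives rise to two padded strings $s\cdot a_s$ and $s\cdot b_s$, where $a_s=\str{01}^{x_s}\cdot\str{00}^{x-x_s}$ and $b_s=\str{10}^{x_s}\cdot\str{00}^{x-x_s}$ (with $x=\max_s x_s$) are antipodal on the gadget columns. Since $\dH(a_s,b_s)=2x_s$, the triangle inequality forces any solution's suffix to pay at least $x_s$ against one of the two copies, so the handicap $x_s$ is transferred to the prefix \emph{unconditionally}: the projection of any solution onto the first $n$ columns is already feasible, and no repair step is needed. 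The cost is $k'=2k$, $D'=2D$, $n'\le n+2d$. Your gadget keeps a single copy per string, with $\sum_i x_i$ ownership columns, and leaves $k$, $d$, $D$ unchanged; the price is that a solution can cheat by matching $\sharp$ symbols, and your exchange argument --- at most one over-demanding string, whose repair by $t$ flips is absorbed because $t\le 2m_{i^\star}-M\le M-2m_i\le (d-x_i)-\delta_i$ for every other string $i$ --- is exactly the extra work that the paper's antipodal trick avoids. Both constructions are valid parameterized reductions for every parameterization the paper actually invokes ($d$, $(k,d)$, and $D$); one difference worth noting is that your padded length is $n+\sum_i x_i\le n+kd$, so it is polynomial only in combined parameters involving $k$, whereas the paper's $n'\le n+2d$ does not depend on $k$ at all.

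There is one genuine, though easily repaired, hole in the $\rsHConsensusM$ case: your constructed instance keeps the sum bound $D$, and the backward direction fails for $k=1$. Concretely, take $k=1$, $s_1=\str{aaa}$, $x_1=2$, $d=3$, $D=1$: the original instance is infeasible, since it demands $\dH(s_1,s^*)\le D-x_1<0$, yet $w'=s_1\cdot\sharp\sharp$ solves your constructed instance with cost $0$. The culprit is that your repair is calibrated against the radius $d$ only, and your sum accounting relies on $(k-2)t\le(k-2)M$, which has the wrong sign for $k=1$: the padded bound then reads $\delta_1\le D-x_1+m_1$, so matching $\sharp$ symbols genuinely relaxes the effective sum budget instead of charging for it. The fix is trivial: handle $k=1$ separately, or first reject instances with $D<\sum_i x_i$; for $k\ge2$ your identity $\sum_i\dH(s_i,w')=\sum_i\delta_i+\sum_i x_i+(k-2)M$ already shows such instances map to no-instances, and the rest of your argument goes through verbatim. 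The radius-only version has no such issue and is correct for all $k$. With that caveat added, the proof stands.
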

\begin{proof}
We prove the result for $\rsHConsensusM$, as the proof for $\rHConsensusM$ is identical, without the considerations on $D$.

Given a set of $k$ strings $\Scal=\{s_1, \dd , s_k\}$ each of length $n$, integers~$x_1 \dd x_k$ and $D$, we construct a corresponding instance of $\anyConsensus{rs,\dH}$. Let $x=\max_s \{x_s\}$. For each $s\in \Scal$ we define strings $a_s$ and $b_s$ over the alphabet $\Sigma\cup \{\str{0},\str{1}\}$ (where $\str{0},\str{1}\not\in\Sigma$) as follows: $a_s=\str{01}^{x_s} \cdot \str{00}^{x-x_s}$ and $b_s=\str{10}^{x_s} \cdot \str{00}^{x-x_s}$. We claim that $\Tilde{s}$ is a solution for \anyConsensus{rs,\dH} with $\Scal'=\{s\cdot a_s|~s\in\Scal\}\cup,\{s\cdot b_s|~s\in\Scal\}$ and parameters $(d,2D)$ if and only if $s^*=\Tilde{s}[1 \dd n]$ is a solution for $\rsHConsensusM$ on $\Scal$ with parameters $(d,D)$ and $x_1 \dd x_k$.

Let $s^*$ be a solution for \rsHConsensusM~on $\Scal$ with parameters $(d,D)$ and $x_1 \dd x_k$. Then one can extend $s^*$ into $\Tilde{s}=s^*\cdot \str{00}^x$ to obtain a solution for \anyConsensus{rs,\dH} with $\Scal'$ and parameters $(d,2D)$: indeed, for every $s\in\Scal$ we have $\dH(\str{00}^x,a_s)=\dH(\str{00}^x,b_s)=x_s$. 

Conversely, let $\Tilde{s}$ be a solution for \anyConsensus{rs,\dH} with $\Scal'$ and parameters $(d,2D)$. We write $s^*=\Tilde{s}[1 \dd n]$ and $s'=\Tilde{s}[n+1 \dd n+x]$. Since for each $s$ one has $\dH(a_s,b_s)=2x_s$, one has $\dH(s',a_s)+\dH(s',b_s)\ge 2x_s$ from the triangular inequality, without loss of generality $\dH(s',a_s)\ge x_s$. But then, one has $\dH(s^*,s)=\dH(\Tilde{s},s\cdot a_s)-\dH(s',a_s)\le d-x_s$.

Similarly, one can observe that $\sum_s \dH(s',a_s)+\sum_s \dH(s',b_s)\ge 2\sum_s{x_s}$ and hence $\sum_s \dH(s,s^*)\le D$.

Finally, we can conclude by observing that $|\Scal'|=2k(n+d)$. One can assume that $d\le n$ (if not, one can replace $d$ with $n$, each $x_s$ with $x_s-(d-n)$, and update $D$ accordingly), hence $|\Scal'|=\cO(kn)$.
\qed \end{proof}

We can now state the main theorem of the section:

\begin{theorem}\label{thm:results swap}
The problems $\rSconsensus$ (resp.~$\rsSconsensus$) and $\anyConsensus{r,\dH}$ (resp.  $\anyConsensus{rs,\dH}$) both admit a parameterized reduction to each other. We have the following: \begin{itemize}
    \item $\rSconsensus(d)\in \FPT$ and $\rsSconsensus(d)\in \FPT$.
    \item $\rSconsensus(k,d)$ and $\rsSconsensus(k,d)$ both have polynomial kernels of size $\cO(k^2 d \log k)$.
    \item $\rsSconsensus(D)$ has a polynomial kernel of size $\cO(D^3\log D)$.
    \item Neither $\rSconsensus(d,n,|\Sigma|)$ nor $\rsSconsensus(d,n,|\Sigma|)$ have polynomial kernels unless $\NP\subseteq \text{co}\NP/\text{poly}$.
\end{itemize}
\end{theorem}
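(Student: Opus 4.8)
The plan is to prove every item by exhibiting a parameterized reduction in each direction between the swap problems and their Hamming counterparts, and then importing the known results for $\anyConsensus{r,\dH}$ and $\anyConsensus{rs,\dH}$. The direction from swaps to Hamming is already at hand: given $\Scal$, I would reject if $\mathcal{C}_\Scal=\emptyset$ (detectable while attempting the construction), otherwise compute a disentanglement in $\cO(kn)$ time (Lemma~\ref{lem:disentanglement}), turn the resulting pairwise-matching family into a $\rHConsensusM$ (resp.\ $\rsHConsensusM$) instance via Corollary~\ref{cor:pairwise-reduction}, and finish with Lemma~\ref{lem:mixed to hamming} to land in $\anyConsensus{r,\dH}$ (resp.\ $\anyConsensus{rs,\dH}$). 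Each step alters $d$, $D$, and $k$ by only polynomial (indeed constant) factors, so this is a parameterized reduction for every parameterization considered; composed with the FPT algorithms of \cite{GRA,BUL} it already gives $\rSconsensus(d),\rsSconsensus(d)\in\FPT$.

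The delicate half is the reverse reduction, from $\anyConsensus{r,\dH}$ (resp.\ $\anyConsensus{rs,\dH}$) to $\rSconsensus$ (resp.\ $\rsSconsensus$), which is also what the negative result rests on. I would reduce from \textsc{Closest String} over a binary alphabet, for which the $(d,n)$ no-polynomial-kernel bound of \cite{BasavarajuEtAl2018,BUL} holds, and I would first delete all clean columns---a polynomial preprocessing that preserves the answer and $d$ while only shrinking the instance, so the resulting \emph{all-dirty} instances still admit no polynomial kernel. Each bit is then encoded by a length-two block over a \emph{private} letter pair: position $p$ with $s_i[p]=\str{0}$ becomes $\str{x}_p\str{y}_p$ and $s_i[p]=\str{1}$ becomes $\str{y}_p\str{x}_p$. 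This produces strings of length $2n$ over an alphabet of size $2n$, with unchanged radius $d'=d$ (and sum bound $D'=D$ for the \str{rs} variant); all of $d',n',|\Sigma'|$ are polynomial in $(d,n)$, so this is a valid parameterized reduction.

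The crux, and the step I expect to be the main obstacle, is to show that the encoding preserves distances exactly, i.e.\ $\dS(t_i,t^*)=\dH(s_i,s^*)$ where $s^*$ is read off from a swap consensus $t^*$. The decisive fact is that a swap permutation is a set of \emph{support-disjoint} adjacent transpositions, so in passing from any $t_i$ to $t^*$ every letter moves by at most one position; since $\str{x}_p,\str{y}_p$ are private and occupy block $p$ in every input, the only way a letter could leave its block is a swap across a block boundary. I would rule this out using that every column is dirty: a boundary swap would force $t^*$ to carry, at one fixed position, a letter that is determined differently by two inputs of opposite orientation on the adjacent block, which is impossible. Hence $t^*$ must be block-structured, each block equal to $\str{x}_p\str{y}_p$ or $\str{y}_p\str{x}_p$, contributing $0$ or $1$ to $\dS(t_i,t^*)$ exactly as the corresponding bit contributes to $\dH(s_i,s^*)$. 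The converse direction is immediate, as a Hamming consensus lifts to the block-structured string realizing the same per-string costs; and because the correspondence preserves each individual distance, the radius and the sum objectives transfer simultaneously, covering both $\rSconsensus$ and $\rsSconsensus$.

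It then remains to collect the conclusions. Both problems are $\NP$-complete (\cite{AMI} for the swap side, \cite{FRA} for the Hamming side), and I have a parameterized reduction in each direction, so by the standard transfer of (non-)kernelizability recalled in the preliminaries \cite{BodlaenderEtAl2011} I can reduce a swap instance to the Hamming side, apply the Hamming kernel, and map the small instance back. This carries the $\cO(k^2 d\log k)$ kernels \cite{GRA,BasavarajuEtAl2018,BUL} and the $\cO(D^3\log D)$ kernel \cite{BUL} over to $\rSconsensus(k,d),\rsSconsensus(k,d)$ and $\rsSconsensus(D)$, respectively. Symmetrically, the binary backward reduction keeps $(d,n)$ polynomial and produces $|\Sigma'|=\cO(n)$, so a polynomial kernel for $\rSconsensus(d,n,|\Sigma|)$ (resp.\ $\rsSconsensus(d,n,|\Sigma|)$) would yield one for $\anyConsensus{r,\dH}(d,n)$ (resp.\ $\anyConsensus{rs,\dH}(d,n)$), contradicting \cite{BasavarajuEtAl2018,BUL} unless $\NP\subseteq\text{co}\NP/\text{poly}$; this is the last bullet. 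Beyond the block analysis, the only extra care is routine bookkeeping: ensuring the $\mathcal{C}_\Scal=\emptyset$ case and instances whose forced (necessary) swaps already exceed the budget are rejected cleanly, so that the claimed parameter bounds hold on every produced instance.
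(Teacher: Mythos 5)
Your proposal follows the same skeleton as the paper's proof: reduce $\rSconsensus$ (resp.~$\rsSconsensus$) to $\anyConsensus{r,\dH}$ (resp.~$\anyConsensus{rs,\dH}$) via Lemma~\ref{lem:disentanglement}, Corollary~\ref{cor:pairwise-reduction} and Lemma~\ref{lem:mixed to hamming}, then transfer the known \FPT and kernelization results. Where you genuinely diverge is the reverse direction: the paper simply invokes the linear, distance-preserving reduction of Amir et al.~\cite{AMI}, whereas you reconstruct such a reduction from scratch, deleting clean columns of a binary \textsc{Closest String} instance and encoding bit $\str{0}$/$\str{1}$ at position $p$ as $\str{x}_p\str{y}_p$/$\str{y}_p\str{x}_p$ over a private letter pair. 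Your correctness argument for this encoding is sound: since $\str{x}_p,\str{y}_p$ occur only in block $p$ and a swap permutation moves each letter by at most one position, a cross-boundary swap between a consensus $t^*$ and one input forces the same boundary swap with \emph{every} input, which would make the adjacent original columns clean; with all columns dirty, $t^*$ must be block-structured, so $\dS(t_i,t^*)=\dH(s_i,s^*)$ exactly, and the radius, sum, and radius-plus-sum objectives all transfer at once. What your explicit construction buys is self-containedness: the paper's remark that \cite{AMI}'s construction ``preserves all the distances'' (needed for the rs-variant) is asserted, not proved, and you supply precisely that argument.

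One loose end to tighten: your block encoding is defined only for binary alphabets. This is harmless for the \FPT claims and for the last bullet (the no-polynomial-kernel bound for $\anyConsensus{r,\dH}(d,n)$ already holds on binary instances, and your parameters $d'=d$, $n'=2n$, $|\Sigma'|=2n$ are polynomially bounded). But for the second and third bullets, the ``map the small instance back'' step receives a kernelized Hamming instance whose alphabet may have up to $k$ letters, to which your encoding does not directly apply. You would need either a binarization step first (e.g., one-hot encoding, which doubles $d$ and multiplies the length by $|\Sigma|$) or the generic compression-plus-\NP-completeness transfer of \cite{BodlaenderEtAl2011}; either route yields a polynomial kernel, but with a polynomially worse size than the stated $\cO(k^2 d\log k)$ unless the back-mapping is linear. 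The paper sidesteps this by citing \cite{AMI}'s reduction as linear, so this is a shared elision rather than a flaw of your approach, but it is the one step of your write-up that does not fully support the exact kernel sizes claimed.
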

\begin{proof}
Lemmas~\ref{lem:disentanglement}%
gives us a parameterized reduction from $\rSconsensus$ (resp.~$\rsSconsensus$) to $\anyConsensus{r,\dH}$ (resp.~to $\anyConsensus{rs,\dH}$), and~\cite{AMI} gives us the other direction for $\rSconsensus$ by providing a linear reduction between both problems. In fact, the construction used in~\cite{AMI} also implies a reduction from $\anyConsensus{rs,\dH}$ to $\rsSconsensus$, as all the distances are preserved. Since $\anyConsensus{r,\dH}(d)\in\FPT$~\cite{GRA}, and $\anyConsensus{rs,\dH}(d)\in\FPT$~\cite{BUL}, both problems are $\FPT$ for $d$. The claims on polynomial kernels follow from the one given for corresponding problems on Hamming distance~\cite{BasavarajuEtAl2018,BUL}.\qed  %

\end{proof}

\section{Algorithms for the Swap+Hamming distance}
\label{sec:dsh}

\subsection{Fixed parameter complexity of \rSHconsensus}

In this section, we study the parameterized complexity of $\rSHconsensus$. We start by stating two basic results:

\begin{restatable}{lemma}{greedydsh}\label{lem:greedy-dsh}
Given two strings $s_1,s_2$ both of length $n$, the distance $\dSH$ can be computed in linear time as follows: If $s_1[i,i+1]=s_2[i+1,i]$, we count a swap at position $i$ if and only if we did not count a swap at position $i-1$. The distance $\dSH$ is then the number of mismatches not resolved by a swap.
\end{restatable}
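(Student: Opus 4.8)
The plan is to reason directly from the definition $\dSH(s_1,s_2)=\min_t\bigl(\dS(s_1,t)+\dH(t,s_2)\bigr)$, where $t$ ranges over strings matching $s_1$. By Lemma~\ref{lem:unicity} each such $t$ is obtained from $s_1$ by a unique valid set $P$ of pairwise non-adjacent swaps, and $\dS(s_1,t)=|P|$; conversely every set $P$ of pairwise non-adjacent positions yields a matching string $t_P$. So I would rewrite the objective as minimising $|P|+\dH(t_P,s_2)$ over all such $P$, and let $M=\dH(s_1,s_2)$ denote the number of mismatched positions.

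The key structural observation is locality. A swap at position $i$ modifies only positions $i$ and $i+1$, and since the swaps of $P$ are pairwise non-adjacent they act on disjoint pairs of positions; hence $|P|+\dH(t_P,s_2)$ splits as a sum of independent per-position contributions. A swap at $i$ replaces the local mismatch count $[s_1[i]\neq s_2[i]]+[s_1[i+1]\neq s_2[i+1]]$ by $1+[s_1[i+1]\neq s_2[i]]+[s_1[i]\neq s_2[i+1]]$. A short case check shows this strictly helps --- and then by exactly one --- precisely when $s_1[i,i+1]=s_2[i+1,i]$ with $s_1[i]\neq s_1[i+1]$; I call such an $i$ a \emph{candidate}, and collect them in $C$. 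For any $P$ this yields $|P|+\dH(t_P,s_2)\ge M-|P\cap C|$, with equality achieved by any $P\subseteq C$. Minimising therefore amounts to selecting as many pairwise non-adjacent candidates as possible, and gives the identity $\dSH(s_1,s_2)=M-k^\ast$, where $k^\ast$ is that maximum number.

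It remains to argue that the greedy left-to-right rule of the statement produces a maximum set of pairwise non-adjacent candidates. I would cut $C$ into its maximal runs of consecutive positions; since distinct runs are separated by at least one non-candidate, candidates chosen in different runs are automatically non-adjacent, so the task separates over runs, and a run of length $\ell$ admits at most $\lceil\ell/2\rceil$ pairwise non-adjacent elements. The rule ``place a swap at $i$ iff $i$ is a candidate and no swap was placed at $i-1$'' picks the $1$st, $3$rd, $5$th, \dots\ positions of each run, meeting the bound $\lceil\ell/2\rceil$; a one-line exchange argument on a single run confirms optimality. Everything is done in one scan, hence in linear time, and the value returned is $M-k^\ast$, i.e.\ one unit per substitution and per swap --- which is what the statement records once each swap is credited with repairing one of its two mismatches for free.

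I expect the delicate part to be making the reduction airtight rather than any individual computation: I must ensure (i) that restricting to candidate swaps is without loss of generality --- this is exactly the content of $|P|+\dH(t_P,s_2)\ge M-|P\cap C|$, which is proved for arbitrary $P$, including superfluous swaps or swaps that are not improving --- and (ii) that no ``global'' interleaving of swaps and substitutions can beat the greedy, which is ruled out by separability: disjoint swaps touch disjoint position pairs, so the objective is a sum of independent terms. With separability and the candidate characterisation established, the optimality of the greedy is the classical maximum-independent-set-on-a-path statement.
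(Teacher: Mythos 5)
Your proof is correct, and it shares the paper's central insight: a swap can only lower the cost at positions where $s_1[i,i+1]=s_2[i+1,i]$ (your candidates), and the left-to-right rule then selects a maximum compatible set of such positions. Where the two arguments genuinely differ is in how optimality of that selection is established. The paper's proof is a four-sentence exchange argument: swapping at a non-candidate leaves a mismatch, and skipping a swap at $i$ to enable a later one ``gains at most one, matching the loss,'' so greedy is optimal. You instead make the accounting global and explicit: using Lemma~\ref{lem:unicity} to identify strings $t$ with valid swap sets $P$, you prove separability of $|P|+\dH(t_P,s_2)$ into independent local terms (each term is $-1$ exactly at candidates and $\ge 0$ elsewhere), derive the lower bound $|P|+\dH(t_P,s_2)\ge M-|P\cap C|$ for \emph{arbitrary} $P$ with equality for $P\subseteq C$, obtain the closed form $\dSH(s_1,s_2)=M-k^\ast$, and reduce greedy optimality to maximum independent set on the runs of $C$, settled by the $\lceil\ell/2\rceil$ bound. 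Your route is longer but buys two things the paper leaves implicit: first, the separability claim itself, without which the paper's exchange argument is not airtight (one must know that a swap's benefit cannot interact with decisions made far away); second, a careful treatment of the degenerate positions where $s_1[i]=s_1[i+1]=s_2[i]=s_2[i+1]$ --- a literal reading of the lemma's rule would count a vacuous swap there, and your restriction of $C$ to positions with $s_1[i]\neq s_1[i+1]$ is what makes the count $M-k^\ast$ come out right (it would strengthen your write-up to note explicitly that such degenerate positions are never adjacent to a true candidate, so counting or ignoring them cannot shift the parity of a run). The paper's proof buys brevity; yours buys a self-contained argument whose only external ingredient is Lemma~\ref{lem:unicity}.
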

\begin{proof}
This greedy strategy gives a valid transformation. To prove optimality, observe that swapping at $i$ when $s_1[i,i+1] \ne s_2[i+1,i]$ leaves at least one mismatch, so only valid swaps reduce cost. Now, skipping a swap at $i$ to allow a later swap (e.g., in a pattern like $\str{ab}\dd\str{ba}$) gains at most one, matching the loss from skipping the swap at $i$. Thus, the greedy left-to-right strategy is optimal.

\qed \end{proof}

\begin{lemma}\label{lem:triangle inequality}
Given two strings $s_1,s_2$ both of length $n$, the following inequality holds:
$$\dSH(s_1,s_2)\le \dH(s_1,s_2)\le 2\dSH(s_1,s_2).$$

As a consequence, given $3$ strings $s_1,s_2,s_3$ all of length $n$, a \emph{weakened triangle inequality} holds:
$$\dSH(s_1,s_3)\le\min\{2\dSH(s_1,s_2)+\dSH(s_2,s_3),\dSH(s_1,s_2)+2\dSH(s_2,s_3)\}.$$
\end{lemma}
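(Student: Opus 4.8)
The plan is to establish the two-sided bound $\dSH(s_1,s_2)\le\dH(s_1,s_2)\le 2\dSH(s_1,s_2)$ first, and then derive the weakened triangle inequality by routing one of the two ``legs'' through the \emph{genuine} triangle inequality of the Hamming distance, paying a factor~$2$ to convert it back. For the two-sided bound I would appeal directly to the greedy characterization of Lemma~\ref{lem:greedy-dsh}: an optimal mixed transformation from $s_1$ to $s_2$ consists of $\sigma$ pairwise-disjoint swaps together with $\mu$ substitutions acting on pairwise-disjoint positions, so that $\dSH(s_1,s_2)=\sigma+\mu$. Each swap at a position $i$ resolves exactly the two mismatches at $i$ and $i+1$ (it is a genuine swap, so $s_1[i]\neq s_1[i+1]$), while each substitution resolves exactly one mismatch; since these positions are disjoint and together cover every mismatching position between $s_1$ and $s_2$, we obtain $\dH(s_1,s_2)=2\sigma+\mu$. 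The chain $\sigma+\mu\le 2\sigma+\mu\le 2(\sigma+\mu)$ then yields both inequalities at once.

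The core of the weakened triangle inequality is the one-sided estimate $\dSH(s_1,s_3)\le \dSH(s_1,s_2)+\dH(s_2,s_3)$. To prove it, I would let $t$ be an intermediate string realizing $\dSH(s_1,s_2)=\dS(s_1,t)+\dH(t,s_2)$; in particular $t$ matches $s_1$, so $t$ is an admissible choice in the minimization defining $\dSH(s_1,s_3)$. Applying the true triangle inequality of $\dH$ to the second summand gives
\[
\dSH(s_1,s_3)\le \dS(s_1,t)+\dH(t,s_3)\le \dS(s_1,t)+\dH(t,s_2)+\dH(s_2,s_3)=\dSH(s_1,s_2)+\dH(s_2,s_3).
\]
Combining this with $\dH(s_2,s_3)\le 2\dSH(s_2,s_3)$ from the first part immediately yields the bound $\dSH(s_1,s_3)\le \dSH(s_1,s_2)+2\dSH(s_2,s_3)$.

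For the symmetric bound $\dSH(s_1,s_3)\le 2\dSH(s_1,s_2)+\dSH(s_2,s_3)$, I would first record that $\dSH$ is symmetric. Given an optimal decomposition $s_1\to t\to s_2$ as above, with swap positions and substitution positions disjoint, applying the same swaps to $s_2$ recovers $s_1$ on the swapped positions (swaps are involutions) and leaves the substituted positions as the only mismatches, producing an equally cheap witness for $\dSH(s_2,s_1)\le\dSH(s_1,s_2)$; the reverse direction is identical. With symmetry in hand, $\dSH(s_1,s_3)=\dSH(s_3,s_1)\le \dSH(s_3,s_2)+\dH(s_2,s_1)\le \dSH(s_2,s_3)+2\dSH(s_1,s_2)$, which is the remaining inequality; taking the minimum of the two completes the proof.

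The main obstacle is conceptual rather than computational: one cannot simply compose an optimal $s_1\to s_2$ transformation with an optimal $s_2\to s_3$ transformation, because the concatenation of two valid swap sequences need not itself be a valid swap sequence---Lemma~\ref{lem:three-way-match} exhibits exactly this failure, where two individually matching pairs become incomparable. The key idea that sidesteps this is to keep only one swap ``leg'' intact and absorb the other into the Hamming distance, which \emph{is} a genuine metric; the factor~$2$ in the statement is precisely the price of this conversion via $\dH\le 2\dSH$, which is why the bound is asymmetric and cannot be tightened to an ordinary triangle inequality.
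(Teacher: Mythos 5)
Your proposal is correct and follows essentially the same route as the paper's proof: the two-sided bound via the observation that a swap accounts for two mismatches and a substitution for one, then the key one-sided estimate $\dSH(s_1,s_3)\le \dSH(s_1,s_2)+\dH(s_2,s_3)$ obtained by composing a mixed transformation with a substitution-only one, and finally the symmetric bound by exchanging the roles of $s_1$ and $s_3$. Your write-up is in fact slightly more careful than the paper's, since you make the mismatch count $\dH = 2\sigma+\mu$ explicit and you actually prove the symmetry of $\dSH$ (via applying the optimal swap set to $s_2$), a fact the paper uses implicitly when it invokes the symmetry of $s_1$ and $s_3$.
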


\begin{proof}
The first statement follows from the fact that a swap transformation modifies two letters, and that any transformation that performs only substitutions is also valid for the Swap+Hamming distance (even if it is not optimal in general).

For the second statement, one may notice that given $s_1,s_2,s_3$, it holds that $\dSH(s_1,s_3)\le \dSH(s_1,s_2)+\dH(s_2,s_3)$. Indeed, composing a transformation that performs swaps and substitutions and a transformation that performs only substitutions gives a valid Swap+Hamming transformation. However, this does not hold if both transformations involve swaps, as those swaps might be incompatible. The second statement then follows from the first one by noticing that the roles of $s_1$ and $s_3$ are symmetrical.
\qed \end{proof}

Our result is mainly based on the following Lemma, which is already central in the proof of the corresponding theorem for the Hamming distance.

\begin{lemma}[\cite{GRA}]\label{lem:getting closer}
Let $\Scal=\{s_1, \dd , s_k\}$ be a set of strings having length $n$, $s^*$ a string satisfying $\max_{s\in\Scal}\dH(s^*,s)\le d$, and $\Tilde{s}$ a string such that for at least one $s\in\Scal$ we have $\dH(\Tilde{s},s)\ge d+1$. Let us fix a set $P$ of $d+1$ positions $p$ such that $\Tilde{s}[p]\neq s[p]$ for each $p\in P$. Then, for at least one $p\in P$, $s[p]=s^*[p]$.
\end{lemma}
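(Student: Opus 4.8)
The plan is to argue by contradiction with a direct counting argument on the single string that witnesses the violation. Let $s\in\Scal$ be the string guaranteed by the hypothesis, so that $\dH(\Tilde{s},s)\ge d+1$; this is exactly the string used to select the position set $P$. The only two facts I would invoke are the two distance bounds already present in the statement: on one hand $\dH(s^*,s)\le d$, since $s^*$ lies within radius $d$ of \emph{every} input string and in particular of $s$; on the other hand $P$ is, by construction, a set of exactly $d+1$ positions on which $\Tilde{s}$ and $s$ disagree.

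The key step is to suppose, for contradiction, that the conclusion fails, i.e.\ that $s[p]\neq s^*[p]$ for every $p\in P$. Then $s$ and $s^*$ differ at all $d+1$ positions of $P$, whence $\dH(s^*,s)\ge |P|=d+1$. This contradicts $\dH(s^*,s)\le d$, so at least one position $p\in P$ must satisfy $s[p]=s^*[p]$, which is precisely the claim.

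I do not expect a genuine obstacle here: this is the classical branching lemma underlying the Hamming \textsc{Closest String} FPT algorithm, and its entire content is the choice $|P|=d+1>d$, which forces $s^*$ to agree with $s$ somewhere inside $P$ by pigeonhole against the radius budget $d$. The one thing to keep in mind while writing it cleanly is that $\Tilde{s}$ plays no role beyond defining $P$: the mismatches recorded in $P$ are between $\Tilde{s}$ and $s$, while the decisive comparison is between $s$ and $s^*$ on exactly those $d+1$ positions. Since the statement is purely about $\dH$, none of the swap machinery developed earlier in the paper is needed.
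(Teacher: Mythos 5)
Your proof is correct: the pigeonhole argument---assuming $s[p]\neq s^*[p]$ on all $d+1$ positions of $P$ forces $\dH(s,s^*)\ge d+1$, contradicting the radius bound $\dH(s^*,s)\le d$---is exactly the standard argument for this branching lemma, which the paper itself does not reprove but simply cites from~\cite{GRA}. Your remark that $\Tilde{s}$ plays no role in the truth of the statement (it only defines $P$, and matters solely for the algorithmic branching step where one replaces $\Tilde{s}[p]$ by $s[p]$) is also accurate.
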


We now prove our main result on $\rSHconsensus$:

\begin{theorem}\label{thm:rSHconsensus FPT}
$\rSHconsensus(d)\in\FPT$.
\end{theorem}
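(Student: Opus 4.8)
The plan is to adapt the bounded-search-tree algorithm underlying the \FPT result for \textsc{Closest String}, i.e., the branching driven by Lemma~\ref{lem:getting closer}. The starting observation is Lemma~\ref{lem:triangle inequality}: any solution $s^*$ with $\max_{s\in\Scal}\dSH(s,s^*)\le d$ also satisfies $\max_{s\in\Scal}\dH(s,s^*)\le 2d$. Taking $\tilde s=s_1$ as the initial candidate we thus have $\dH(\tilde s,s^*)\le 2d$, so I would maintain $\tilde s$ together with the potential $\dH(\tilde s,s^*)$ (bounded by $2d$ on the branch that tracks $s^*$) and design the branching so that each correct step strictly decreases this potential. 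This caps the search-tree depth at $2d$; keeping the branching factor bounded by a function of $d$ then yields an $f(d)\cdot\mathrm{poly}(nk)$ algorithm, using the greedy evaluation of Lemma~\ref{lem:greedy-dsh} to test $\dSH(\tilde s,s)\le d$ for every $s\in\Scal$ in linear time at each node.

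At each node I first test whether $\tilde s$ is already a solution; if so, we are done. Otherwise I pick some $s_j$ with $\dSH(\tilde s,s_j)>d$ and split into two regimes according to $\dH(\tilde s,s_j)$. When $\dH(\tilde s,s_j)\ge 2d+1$, I apply Lemma~\ref{lem:getting closer} with Hamming radius $2d$: fixing any $2d+1$ positions where $\tilde s$ and $s_j$ disagree, one of them, say $p$, satisfies $s_j[p]=s^*[p]$, and since $\tilde s[p]\ne s_j[p]$ the substitution branch $\tilde s[p]\leftarrow s_j[p]$ strictly decreases $\dH(\tilde s,s^*)$; branching over the $2d+1$ positions therefore contains a progress step. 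When instead $\dH(\tilde s,s_j)\le 2d$, the disagreement between $\tilde s$ and $s_j$ is confined to at most $2d$ positions, but these may have to be reconciled by \emph{swaps} rather than substitutions. Here I would run the greedy procedure of Lemma~\ref{lem:greedy-dsh} and branch, at each mismatch position $p$, over the $\cO(1)$ local corrections compatible with a single operation of an optimal transformation of $s^*$ into $s_j$: the substitution $\tilde s[p]\leftarrow s_j[p]$ and the two adjacent swaps $\tilde s[p\dd p+1]\leftarrow s_j[p+1\dd p]$ and $\tilde s[p-1\dd p]\leftarrow s_j[p\dd p-1]$. This gives branching factor $\cO(d)$ and depth $2d$, hence $\cO(d)^{2d}\cdot\mathrm{poly}(nk)$ time overall.

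The technical heart, and the step I expect to be the main obstacle, is a \dSH-analogue of Lemma~\ref{lem:getting closer}: whenever $\tilde s\ne s^*$ and $\dSH(\tilde s,s_j)>d\ge\dSH(s^*,s_j)$, at least one enumerated local correction strictly decreases $\dH(\tilde s,s^*)$. The counting side is manageable: an optimal transformation of $s^*$ into $s_j$ uses at most $d$ operations, so among the positions it affects at most $d$ are substitutions whose value is unrelated to $s_j$, which is what forces a correctable position among $d+1$ chosen mismatches (with Lemma~\ref{lem:triangle inequality} controlling the cost of incompatible swaps). The delicate point is the coupling introduced by swaps: a position where $\tilde s$ currently \emph{agrees} with $s_j$ but $s^*$ performs a swap is a genuine error with respect to $s^*$ yet is invisible among the $\tilde s$--$s_j$ mismatches, so a naive enumeration of such swaps would range over $\Omega(n)$ positions and give only an algorithm in $\textsf{XP}$ rather than in $\FPT(d)$. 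I would resolve this by a swap-reuse argument: if $\tilde s$ matched $s^*$ on all swap positions of the optimal transformation, those swaps could be transplanted to witness $\dSH(\tilde s,s_j)\le d$, contradicting $\dSH(\tilde s,s_j)>d$; hence a progress-making correction always exists among positions lying in, or adjacent to, the $\cO(d)$ positions where $\tilde s$ and the input strings disagree, keeping the branching localized and the whole procedure in $\FPT(d)$.
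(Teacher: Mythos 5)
Your algorithm is structurally identical to the paper's proof: start from $\tilde s\in\Scal$, bound the search depth by $2d$ via $\dH(\tilde s,s^*)\le 2\dSH(\tilde s,s^*)\le 2d$ (Lemma~\ref{lem:triangle inequality}), and at a string $s_j$ with $\dSH(\tilde s,s_j)>d$ branch over $2d+1$ substitutions when $\dH(\tilde s,s_j)\ge 2d+1$ (Lemma~\ref{lem:getting closer} applied with radius $2d$), and otherwise over the substitution plus the two adjacent swaps at each of the $\cO(d)$ mismatch positions. This is exactly the paper's algorithm, so the only substantive content is the progress guarantee in the second regime, which you correctly single out as the technical heart --- and there your argument as written has a genuine gap.

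Let $M$ be the set of mismatch positions between $\tilde s$ and $s_j$, and let $T^*$ be an optimal Swap+Hamming transformation between $s^*$ and $s_j$. Your swap-reuse step derives a contradiction from the hypothesis that $\tilde s$ agrees with $s^*$ on \emph{all} swap positions of $T^*$, and from that contradiction you conclude that some enumerated branch makes progress. But the negation of ``some enumerated branch makes progress'' only yields: (i) $s_j[p]\neq s^*[p]$ for all $p\in M$, and (ii) $\tilde s[p,p+1]=s^*[p,p+1]$ for every swap of $T^*$ with an endpoint in $M$. It says nothing about swaps of $T^*$ disjoint from $M$; in fact, at any such swap $(p,p+1)$ one automatically has $\tilde s[p,p+1]=s_j[p,p+1]\neq s^*[p,p+1]$, so whenever $T^*$ contains a swap disjoint from $M$ your hypothesis simply fails and no contradiction is obtained --- this is precisely the ``invisible swap'' scenario you flagged, and your argument leaves it unhandled. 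The repair is local and preserves your transplant idea: under (ii), every swap of $T^*$ touching $M$ has \emph{both} endpoints in $M$ (otherwise the endpoint outside $M$ would satisfy $\tilde s=s_j$ there while $s^*\neq s_j$ there, contradicting (ii)); transplant exactly those swaps onto $\tilde s$, and pay one substitution for each remaining position of $M$ --- by (i) each such position is a mismatch of $s^*$ versus $s_j$ that $T^*$ resolves by a substitution, so these number at most the substitutions of $T^*$. The resulting transformation witnesses $\dSH(\tilde s,s_j)\le \dSH(s^*,s_j)\le d$, contradicting $\dSH(\tilde s,s_j)>d$. (The paper closes this same case by the dual counting $\dSH(s_j,s^*)\ge\dSH(s_j,\tilde s)\ge d+1$.) With that repair your proof is complete and coincides with the paper's.
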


\begin{proof}
    We adapt the algorithm from \cite{GRA}, with additional branchings that take swapped pairs of letters from one of the strings in $\Scal$. Namely, let us assume that we are given a candidate string $\Tilde{s}$ such that, for some fixed $s\in\Scal$, we have $\dSH(\Tilde{s},s)>d$, and that there exists a consensus string $s^*$ with $\max_{s\in\Scal}\dSH(s,s^*)\le d$. We first show that, similarly to the Hamming distance, we can always find a set of $\cO(d)$ strings where at least one is strictly closer (in Hamming distance) from $s^*$ than $\Tilde{s}$ is.
    
    Let us first assume that $\dH(\Tilde{s},s)\ge 2d+1$. Then, from Lemma~\ref{lem:triangle inequality}, we know that $\max_{s\in\Scal}\dH(s,s^*)\le 2d$, hence from Lemma~\ref{lem:getting closer}, for any fixed set $M$ of $2d+1$ mismatch positions between $\Tilde{s}$ and $s$, the strings $s$ and $s^*$ must match on at least one position from $M$. We can then obtain a suitable set by defining for each $p\in M$ a string obtained from $\Tilde{s}$ by replacing $\Tilde{s}[p]$ by $s[p]$. Let us now assume that $\dH(\Tilde{s},s)\le 2d$. By hypothesis and by Lemma~\ref{lem:triangle inequality}, we also have $\dH(\Tilde{s},s)\ge d+1$. Let $M$ be the set of mismatching positions between $\Tilde{s}$ and $s$, with $|M|=\cO(d)$. If for some $p\in M$, one has $s[p]=s^*[p]$, we obtain the set of $\cO(d)$ strings as before. On the other hand, if for every $p\in M$, one has $s[p]\neq s^*[p]$, then the strings $s$ and $s^*$ have $d+1$ mismatches, and since $\dSH(s,s^*)\le d$, there is at least one swap between $s$ and $s^*$ involving a position from $M$. So for some pair $(p,p+1)$ with $p$ or $p+1$ in $M$, one has $s[p,p+1]=s^*[p+1,p]$. What remains to show is that for at least one such pair, one has $\Tilde{s}[p,p+1]\neq s^*[p,p+1]$: if this is not the case, then each mismatch (resp.~swap) from $\Tilde{s}$ to $s$ is also a mismatch (resp.~swap) from $s$ and $s^*$, hence $\dSH(s,s^*)\geq\dSH(s,s')\ge d+1$, a contradiction. In that case, the $\cO(d)$ strings are constructed by replacing $\Tilde{s}[p,p+1]$ with $s[p+1,p]$, or $\Tilde{s}[p-1,p]$ with $s[p,p-1]$ for each $p\in M$.
    
       We can now describe an \FPT algorithm for \rSHconsensus: we start with a candidate $\Tilde{s}\in \Scal$ and explore each branch as described (if $\dH(\Tilde{s},s)\ge 2d+1$, we consider only $2d+1$ substitutions, otherwise we also consider the swaps as detailed above). There are $\cO(d)$ of them since we impose $|M|\le 2d+1$. The depth of the branching tree is also in $\cO(d)$: since $\Tilde{s}\in\Scal$, one has $\dH(\Tilde{s},s^*)\le 2 \dSH(\Tilde{s},s^*)\le 2d$ (Lemma~\ref{lem:triangle inequality}), and we can conclude because this distance decreases for at least one branch at each step. Therefore, after $i$ recursive calls, the best candidate string is at Hamming distance at most $2d-i$ from a solution, hence at Hamming distance at most $4d-i$ from any string in $\Scal$ (by Lemma~\ref{lem:triangle inequality} and the standard triangle inequality for the Hamming distance). Therefore, we can ignore the search branches that cannot lead to a solution by trimming any instance such that $\dH(\Tilde{s},s)\ge 4d-i+1$, where $i$ is the depth of the current recursion in the search tree.\qed

\end{proof}

Finally, we study the kernelization of the consensus problems under $\dSH$. Intuitively, problems for the \dSH~distance are at least as hard as the corresponding problems for the Hamming distance; hence, we can adapt the results from~\cite{BasavarajuEtAl2018,BUL}.

\begin{theorem}\label{thm:SH no polyk}
Neither $\rSHconsensus(d,n,|\Sigma|)$, nor $\rsSHconsensus(d,n,|\Sigma|)$ admits a polynomial kernel, unless $\NP\subseteq \text{co}\NP/\text{poly}$. 
\end{theorem}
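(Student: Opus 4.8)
The plan is to establish both statements by exhibiting polynomial parameter transformations (PPTs) from the Hamming versions to the Swap+Hamming versions: namely from $\anyConsensus{r,\dH}$ to $\rSHconsensus$ and from $\anyConsensus{rs,\dH}$ to $\rsSHconsensus$, designed so that the image parameters $(d,n,|\Sigma|)$ stay polynomially bounded in the source parameter $(d,n)$ (resp.\ $(d,D,n)$). Since $\anyConsensus{r,\dH}$ and $\anyConsensus{rs,\dH}$ admit no polynomial kernel for the parameter $(d,n)$ unless $\NP\subseteq\text{co}\NP/\text{poly}$~\cite{BasavarajuEtAl2018,BUL}, such PPTs transfer the lower bound to the $\dSH$ variants. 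The whole difficulty is to design a reduction in which the extra power granted by swaps is neutralized, so that an optimal $\dSH$-consensus behaves exactly like a Hamming consensus.

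For the construction I would start from an instance over a bounded (w.l.o.g.\ binary) alphabet $\Sigma$---harmless, since the lower bound already holds for such instances---and \emph{separate} consecutive letters by fresh symbols. Concretely, pick $n-1$ pairwise distinct \emph{separators} $\gamma_1,\dots,\gamma_{n-1}\notin\Sigma$ and map each input string $s$ to $\widehat s = s[1]\,\gamma_1\,s[2]\,\gamma_2\cdots\gamma_{n-1}\,s[n]$ of length $2n-1$ over $\Sigma\cup\{\gamma_1,\dots,\gamma_{n-1}\}$, keeping the radius $d$ (and the sum bound $D$) unchanged. Call a candidate $t$ a \emph{clean lift} of some $s^\ast\in\Sigma^n$ if $t=\widehat{s^\ast}$, i.e.\ $t[2j]=\gamma_j$ for all $j$. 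The key geometric observation is that between a clean lift and any $\widehat s$ every pair of adjacent positions consists of one $\Sigma$-letter and one separator; since separators are fresh, no adjacent pair can be exchanged profitably, and hence $\dSH(\widehat s,\widehat{s^\ast})=\dH(\widehat s,\widehat{s^\ast})=\dH(s,s^\ast)$. Thus on clean lifts the reduced instance is, distance for distance, the original Hamming instance.

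The forward direction is then immediate: a Hamming solution $s^\ast$ of radius $d$ (resp.\ radius $d$ and sum $D$) lifts to $\widehat{s^\ast}$, and $\dSH\le\dH$ (Lemma~\ref{lem:triangle inequality}) gives $\max_i \dSH(\widehat{s_i},\widehat{s^\ast})=\max_i\dH(s_i,s^\ast)\le d$ (and likewise for the sum). The converse is the heart of the argument, and I would isolate it as a \emph{normalization lemma}: given any $t$ with $\max_i\dSH(\widehat{s_i},t)\le d$, there exists a clean lift $\widehat{s^\ast}$ with $\dSH(\widehat{s_i},\widehat{s^\ast})\le \dSH(\widehat{s_i},t)$ for \emph{every} $i$ simultaneously (so both the radius and the sum can only decrease). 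Combined with the previous observation this yields $\max_i\dH(s_i,s^\ast)\le d$ (resp.\ $\sum_i\dH(s_i,s^\ast)\le D$), i.e.\ a Hamming solution with the \emph{same} parameters, closing the equivalence.

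To prove the normalization lemma I would exploit the rigidity forced by the unique, column-shared separators together with the single-layer structure of $\dSH$: an optimal $\dSH$-transformation applies one swap permutation (pairwise-disjoint adjacent transpositions, so every character moves by at most one position and each position is touched by at most one swap) and then substitutions. Under this structure a separator can only be displaced from its home even position by one adjacent swap, or destroyed by a substitution; I would then argue region by region that rewriting each displaced separator back to its home position (moving the swapped-in letter to the neighbouring odd position and restoring $\gamma_j$) never increases the cost against any $\widehat{s_i}$, since the ``shifted'' configuration forces, for every string that does not already agree there, a separator mismatch that the clean configuration avoids, so the clean lift dominates pointwise. Iterating over all positions produces the desired clean lift. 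I expect this exchange argument---showing that swaps across separators are never globally advantageous, and in particular that the adversary's $a$ substitutions and $b$ swaps can be matched by a clean lift \emph{without} the factor-$2$ loss allowed by $\dH\le 2\dSH$---to be the main obstacle; everything else (the polynomial running time and the parameter accounting $d'=d$, $D'=D$, $n'=2n-1$, $|\Sigma'|=|\Sigma|+n-1=\cO(n)$, all polynomial in $(d,n)$) is routine. The $\rsSHconsensus$ case is handled by the identical reduction, observing that the normalization preserves the entire distance profile and hence the sum bound $D$ as well.
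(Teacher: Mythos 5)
Your proposal is correct and takes essentially the same route as the paper: the paper also reduces from the Hamming variants by interleaving a fresh separator letter between consecutive columns (a single repeated symbol $\str{\$}$ rather than your distinct $\gamma_j$'s), so that swaps can never be profitable, the $\dSH$-distances coincide with Hamming distances, and the known $(d,n)$ kernel lower bounds transfer. The normalization step you isolate as a lemma for the backward direction is exactly what the paper dispatches with ``it is clear that there is no swap between $s^*$ and any of the $s\in\Scal'$,'' so your write-up is, if anything, more explicit about the one point that genuinely needs an argument.
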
%
\begin{proof}
Given an instance $\Scal$ of $\anyConsensus{r,\dH}$ (resp.~$\anyConsensus{rs,\dH}$), one constructs an instance $\Scal'$ for $\rSHconsensus$ (resp.~$\rsSHconsensus$) by inserting between each column a bogus column that contains a single letter $\$$ that is not in the alphabet. It is clear that given a consensus $s^*$ (for any of the two problems), there is no swap between $s^*$ and any of the $s\in \Scal'$ since the sets of letters on even and odd positions in $\Scal'$ are disjoint. Hence, the Swap+Hamming distances from $s^*$ to the strings in $\Scal'$ exactly correspond to Hamming distances between the strings in $\Scal$ and the string obtained by removing the odd positions in $s^*$, namely, $\Scal'$ is a yes-instance for $\rSHconsensus$ (resp.~$\rsSHconsensus$) if and only if $\Scal$ is a yes-instance for $\anyConsensus{r,\dH}$ (resp.~$\anyConsensus{rs,\dH}$). Since the problems \anyConsensus{r,\dH}~\cite{BasavarajuEtAl2018} and \anyConsensus{rs,\dH}~\cite{BUL} have no polynomial kernels for $(d,n)$ unless $\NP\subseteq \text{co}\NP/\text{poly}$, we conclude with the claimed result.
\qed \end{proof}
\subsection{Polynomial algorithm for \sSHconsensus}

We conclude by showing that \sSconsensus can be solved in polynomial time, using a dynamic programming algorithm. %

Let $\Scal=\{s_1, \dd , s_k\}$ be a set of strings all having length $n$, and let $s^*$ be an arbitrary string of length $n$. We write $SW(\Scal,s^*,i)$ for the set of $j\in[1 \dd k]$ such that there is a swap at position $i$ between $s^*$ and $s_j$ (swaps are taken greedily, from left to right, as in Lemma~\ref{lem:greedy-dsh}). We say that $W=SW(\Scal,s^*,i)$ is \emph{reached} at position $i$ (or that $(i,W)$ is \emph{reached}) by $s^*$. Given a set $W\subseteq [1 \dd k]$, and a position $i\in [1 \dd n]$, we say that $(i,W)$ is \emph{reachable} from $\Scal$, or that $W$ is \emph{reachable} at position $i$ from $\Scal$ if it is reached by some $s^*$.

We start with the following Lemma. Informally, it implies that a solution for \sSHconsensus can be obtained from a solution for \anyConsensus{s,\dH} by the repeated operations of replacing its letters by swapped letters taken from strings in $\Scal$.

\begin{restatable}{lemma}{swapfromcenter}\label{lem:swap-from-center}
Let $\Scal=\{s_1, \dd , s_k\}$ be a set of strings all having length $n$ and admitting a solution to $\sSHconsensus$ with sum of distances $D$. Let $s^*_H$ be the lex-minimal solution to $\anyConsensus{s,\dH}$. Then, if $s^*_{SH}$ is the lex-minimal solution to $\sSHconsensus$ for $\Scal$, the following holds: for each $1\le i\le n$, if one has $s^*_{SH}[i]\neq s^*_H[i]$, then one has $SW(\Scal,s^*,i-1)\neq \emptyset$ or $SW(\Scal,s^*,i)\neq \emptyset$.%
\end{restatable}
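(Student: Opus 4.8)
The plan is to prove the contrapositive: assuming that $s^*_{SH}$ (which I take to be the string $s^*$ in the statement) has no swap with any input string at positions $i-1$ or $i$, i.e.\ $SW(\Scal,s^*_{SH},i-1)=SW(\Scal,s^*_{SH},i)=\emptyset$, I will show $s^*_{SH}[i]=s^*_H[i]$. The starting observation is that under this assumption position $i$ is \emph{swap-isolated}: in the greedy left-to-right computation of each $\dSH(s^*_{SH},s_j)$ (Lemma~\ref{lem:greedy-dsh}), position $i$ is neither the left endpoint of a swap (no swap at $i$) nor its right endpoint (no swap at $i-1$), so it contributes to the distance exactly as a match-or-substitution, i.e.\ by the indicator $[s^*_{SH}[i]\neq s_j[i]]$. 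Moreover every swap used by the greedy transformation sits at a position $\le i-2$ or $\ge i+1$, and therefore touches only positions different from $i$.

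Next I would run an exchange argument. Let $t$ be obtained from $s^*_{SH}$ by setting $t[i]=s^*_H[i]$ and leaving all other letters unchanged. Because the greedy swaps for each pair $(s^*_{SH},s_j)$ avoid position $i$ and $t$ agrees with $s^*_{SH}$ off position $i$, these same swaps remain valid for $t$; applying them and resolving the rest by substitutions gives a (not necessarily optimal) transformation from $t$ to $s_j$ of cost $\dSH(s^*_{SH},s_j)-[s^*_{SH}[i]\neq s_j[i]]+[s^*_H[i]\neq s_j[i]]$. Since $\dSH$ is a minimum over transformations, this is an upper bound on $\dSH(t,s_j)$. The crucial point is that I never recompute the greedy optimum for $t$ (which could cascade and create or destroy swaps at $i-1$, $i$, or $i+1$); I only need an upper bound, and reusing the old swaps supplies one cleanly. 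This is exactly the step I expect to be the main obstacle, and the min-over-transformations definition is what sidesteps it.

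Finally, summing over $j$ and using that $s^*_H[i]$ is, by definition of the lex-minimal $\anyConsensus{s,\dH}$ consensus, a letter minimizing $\sum_j[x\neq s_j[i]]$ over all letters $x$, I obtain $\sum_j\dSH(t,s_j)\le\sum_j\dSH(s^*_{SH},s_j)=D$. As $D$ is optimal, equality must hold throughout, forcing $\sum_j[s^*_H[i]\neq s_j[i]]=\sum_j[s^*_{SH}[i]\neq s_j[i]]$; hence $s^*_{SH}[i]$ is itself a column-$i$ majority letter and $t$ is also an optimal $\sSHconsensus$ solution. Since $s^*_H[i]$ is the \emph{lex-smallest} majority letter, we have $s^*_H[i]\le s^*_{SH}[i]$, and if these differed then $t<s^*_{SH}$ lexicographically while $t$ is optimal, contradicting the lex-minimality of $s^*_{SH}$. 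Therefore $s^*_{SH}[i]=s^*_H[i]$, which is the contrapositive. The boundary cases $i=1$ and $i=n$ require only the single meaningful swap set to be empty (there is no position $0$, resp.\ no position $n+1$, to serve as the missing endpoint), and the argument is otherwise unchanged.
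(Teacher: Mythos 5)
Your proof is correct and follows essentially the same route as the paper's: both decompose the sum of $\dSH$-distances around the swap-isolated position $i$ (prefix cost, column-$i$ Hamming cost, suffix cost), invoke the column-wise optimality of $s^*_H[i]$, and close with the lex-minimality of both strings. Your explicit exchange argument (reusing the greedy swaps as a mere upper-bound transformation for $t$, via the min-over-transformations definition of $\dSH$) is exactly the step the paper leaves implicit in its "by optimality of $s^*_H$ and $s^*_{SH}$, both sides are equal" claim, so you have filled in the same proof in slightly greater detail.
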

\begin{proof}
Let $i$ be such that $s^*_{SH}[i]\neq s^*_H[i]$. If for every $j$, the strings $s^*_{SH}$ and $s_j$ have no swap at positions $i-1$, nor at position $i$, then one has:
\begin{align*}
    \sum_{j\le n} \dSH(s^*_{SH},s_j)&=\sum_{j\le n} \dSH(s^*_{SH}[1 \dd i-1],s_j[1 \dd i-1])\\&+\sum_{j\le n}\dH(s^*_{SH}[i],s_j[i])\\&+\sum_{j\le n}\dSH(s^*_{SH}[i+1 \dd n],s_j[i+1 \dd n])
\end{align*}

But by definition of $s^*_H$, one has $$\sum_{j\le n}\dH(s^*_{H}[i],s_j[i])\le \sum_{j\le n}\dH(s^*_{SH}[i],s_j[i]),$$ and by optimality of $s^*_H$ and $s^*_{SH}$, both sides are equal. Since each strings is also lex-minimal, we obtain  $s^*_{SH}[i]=s^*_{H}[i]$. 
\qed \end{proof}

Given $\Scal$, we consider the set $\Scal_i$ of length $i+1$ prefixes of strings in $\Scal$. We define a dynamic programming table $\Tf$ as follows: Let $i\in [n]$ and $W\subseteq[k]$ such that $(i,W)$ is reachable. Then, $\Tf[i,W]$ is, among all strings $t\in \Sigma^{i+1}$ reaching $(i,W)$, the one minimizing $\sum_{s\in \mathcal{S}_i} \dSH(s,t)$. In case of ties, we pick the lex-minimal such string. Additionally, we set $\Tf[0,\emptyset]=s_H^*[1]$, where $s_H^*$ is a lex-minimal solution for $\anyConsensus{s,\dH}(\Scal)$.

\begin{remark}\label{rk:unique-suffix}
Consider a set of $n$-length strings $\Scal=\{s_1,\dd,s_k\}$, and a pair $(i,W)$ reached by some $t\in \Sigma^{i+1}$. Furthermore, we assume $W \neq \emptyset$. By definition, one must have $t[i,i+1]=s_j[i+1,i]$ for every $j\in W$. In particular, when $W\neq \emptyset$, the pair $(i,W)$ uniquely determines the length-$2$ suffix of $\Tf[i,W]$. However, the converse does not hold, as illustrated in the following example:   
\end{remark}
\begin{example}\label{ex:same-suffix}
Let $\Scal=\{s_1,s_2\}$, with $s_1=bab$ and $s_2=aab$ and consider $t_1=aba$, $t_2=bba$. One can observe that in this case, $SW(\Scal,t_1,2)=\{2\}$, and $SW(\Scal,t_2,2)=\{1,2\}$, even if $t_1[2,3]=t_2[2,3]$.
\end{example}

Although there are $2^k$ distinct subsets $W\subseteq [k]$, the following lemma shows that at most $\cO(kn)$ of them are reachable at a given position $i$; hence the size of the table $\Tf$ remains polynomial.

\begin{lemma}\label{lem:poly-reachable}
Let $\Scal=\{s_1, \dd , s_k\}$ be a set of strings all having length $n$, and $1\le i\le n$. At each position $i$, there are at most $ki$ reachable sets. %
\end{lemma}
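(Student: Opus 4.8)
The plan is to fix a position $i$ and to describe the set $W = SW(\Scal, s^*, i)$ explicitly as a function of $s^*$, then argue that this function admits only few outputs. By Lemma~\ref{lem:greedy-dsh}, any $j \in W$ must in particular satisfy $s^*[i,i+1] = s_j[i+1,i]$; I call the set of such indices $A_i = \{j : s^*[i] = s_j[i+1] \text{ and } s^*[i+1] = s_j[i]\}$, the indices \emph{compatible} at $i$. The set $A_i$ depends only on the letter pair $(s^*[i], s^*[i+1])$, and if $A_i \neq \emptyset$ then this pair must be the reverse of some pair $(s_j[i], s_j[i+1])$ with $s_j[i] \neq s_j[i+1]$; hence $A_i$ takes at most $k$ distinct non-empty values, one for each distinct ordered pair occurring across columns $i$ and $i+1$.

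Next I would pin down, for $j \in A_i$, exactly when the greedy rule of Lemma~\ref{lem:greedy-dsh} records a swap at $i$: this happens precisely when the maximal run of consecutive compatible positions ending at $i$ has odd length. The key step is to show that this run length decouples into a contribution from $s^*$ and a contribution from $s_j$. Writing $(\alpha, \beta) = (s^*[i], s^*[i+1])$, a short downward induction on $t$ shows that positions $i, i-1, \dots, i-t+1$ are all compatible for $j$ if and only if $s^*$ follows the fixed alternating two-letter pattern determined by $(\alpha, \beta)$ on these positions \emph{and} $s_j$ follows the corresponding alternating pattern on $[i-t+1, i+1]$. Consequently the run length equals $1 + \min(M', \mu_j)$, where $M'$ is the number of consecutive positions $i-1, i-2, \dots$ at which $s^*$ follows the pattern (so $M'$ depends only on $s^*$ and on $(\alpha,\beta)$, and $0 \le M' \le i-1$), and $\mu_j$ is the analogous length for $s_j$ (depending only on $s_j$ and on $(\alpha,\beta)$).

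It follows that $W = \{j \in A_i : \min(M', \mu_j) \text{ is even}\}$, so that once $A_i$ is fixed the output $W$ depends on $s^*$ only through the single integer $M' \in \{0, 1, \dots, i-1\}$. Each of the at most $k$ non-empty sets $A_i$ therefore yields at most $i$ distinct non-empty reachable sets, for a total of at most $ki$; the only remaining possibility is $W = \emptyset$ (reached, e.g., whenever $s^*[i] = s^*[i+1]$, so that $A_i = \emptyset$). This establishes the claimed $\cO(ki)$ bound. I expect the decoupling claim of the second paragraph to be the main obstacle: one must unwind the compatibility conditions position by position and use that valid swaps act on distinct letters, in order to verify that the constraints on $s^*$ and on $s_j$ genuinely separate and that the run length is controlled by the minimum of the two independent quantities $M'$ and $\mu_j$.
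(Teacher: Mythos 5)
Your proof is correct, and it takes a genuinely different route from the paper's. The paper proceeds by induction on the position: writing $N_i$ for the number of reachable sets at position $i$, it introduces the canonical sets $W_{\str{ab},i+1}=\{j : s_j[i+1,i+2]=\str{ba}\}$ (at most $k$ of them) and shows via the greedy rule that any set reached at position $i+1$ by a string $t$ with $t[i+1,i+2]=\str{ab}$ equals $W_{\str{ab},i+1}\cap([k]\setminus W')$, where $W'$ is the set reached at position $i$; this intersection can differ from $W_{\str{ab},i+1}$ only if $t[i,i+1]=\str{ba}$, so each non-canonical set at $i+1$ is charged to a distinct reachable set at $i$, giving $N_{i+1}\le N_i+k$. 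You instead give a closed-form parameterization at a fixed position: the reached set is a function of the ordered pair $(s^*[i],s^*[i+1])$ (at most $k$ relevant values, since a non-empty $A_i$ forces this pair to be the reversal of an input pair, and distinct relevant pairs yield disjoint non-empty $A_i$'s) and of the single integer $M'\in\{0,\dots,i-1\}$, via $W=\{j\in A_i : \min(M',\mu_j)\text{ even}\}$. Your key decoupling claim does hold: positions $i-t,\dots,i$ are all compatible for $(s^*,s_j)$ exactly when both strings follow their respective alternating two-letter patterns on that range, so the run length is $1+\min(M',\mu_j)$, and the greedy rule of Lemma~\ref{lem:greedy-dsh} places a swap at $i$ iff this length is odd; one can check that this reproduces Example~\ref{ex:same-suffix}. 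Both arguments exploit the same structural fact (non-canonical reachable sets arise only from alternating runs), but the paper's induction is shorter and avoids the pattern analysis entirely, while your version is stronger in that it explicitly describes all reachable sets rather than merely counting them, which would also allow enumerating them directly.

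Two minor points to fix. First, your compatibility relation should additionally require $s^*[i]\neq s^*[i+1]$: valid swaps act on distinct letters, and as literally defined $A_i$ can be non-empty when $s^*[i]=s^*[i+1]$ (take $s_j[i]=s_j[i+1]=s^*[i]$), but such positions are matches, not swaps, so your parenthetical remark about reaching $\emptyset$ needs this convention. Second, your count yields at most $ki$ non-empty sets plus possibly $\emptyset$, i.e., $ki+1$ in total, which is slightly weaker than the stated bound of $ki$; this looseness is shared by the paper's own proof and is immaterial for the $\cO(kn)$ table-size bound the lemma is used for.
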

\begin{proof}
Write $N_i$ for the number of reachable sets at position $i$. For every $\str{ab}\in \Sigma^2$ we write $W_{\str{ab},i}=\{j\in[k]|~s_j[i,i+1]=\str{ba}\}$. There are at most $k$ such sets for each $i$, and the sets $W_{\str{ab},1}$ are exactly the reachable sets at position $1$, hence $N_1\le k$. 
Now, assume $N_i\le ki$. If $(i+1,W)$ is reached by a string $t$, with $t[i+1,i+2]=\str{ab}$, writing $W'=SW(\Scal,t,i)$, and $\overline{W'}=[k]\setminus W'$, by Lemma~\ref{lem:greedy-dsh} we have $W=W_{\str{ab},i+1}\cap \overline{W'}$. Hence, $W\neq W_{\str{ab},i+1}$ only if there exists $j\in W'\cap W$. But then we have $s_j[i+1,i+2]=\str{ba}$ (because $j\in W$), and $t[i]=s_j[i+1]=\str{b}$ (because $j\in W'$). This means that $t[i,i+1]=\str{ba}$. Since each reachable set $(i,W')$ swaps at most one pair of letters, there are at most $N_i$ sets that are reachable at position $i+1$ other than the sets $W_{\str{ab},i+1}$ for $\str{ab}\in \Scal[i+1,i+2]$. Hence, $N_{i+1}=N_i+k$. The claim then follows by induction. \qed
\end{proof}

The following lemma describes the dynamic programming relations for $\Tf$. We first define, for every reachable $(i,W)$, $i\ge 2$, the set $\pre(i,W)$ of strings that are used to compute $\Tf[i,W]$.

If $W\neq \emptyset$, from Remark~\ref{rk:unique-suffix}, the length-$2$ suffix of $\Tf[i,W]$ is entirely determined. On the other hand, if $W=\emptyset$, we can assume, from Lemma~\ref{lem:swap-from-center}, that the last letter of $\Tf[i,W]$ is $s_H^*[i+1]$, where $s_H^*$ is a lex-minimal solution for $\anyConsensus{s,\dH}(\Scal)$. In both cases, let us write $\str{b}$ for the last letter of $\Tf[i,W]$. We then set $t= \Tf[i-1,W']\cdot \str{b}\in\pre(i,W)$ if $t$ reaches $W$ at position $i$. Furthermore, if $W\neq \emptyset$, writing $\str{a}$ for the penultimate letter of $\Tf[i,W]$ (which is entirely determined from $W$), we also consider strings of the form $t=\Tf[i-2,W']\cdot \str{ab}$ where $(i-2,W')$ is reachable, and such that $t$ reaches $W$ at $i$ and $\emptyset$ at $i-1$, if such strings exist.

\begin{lemma}\label{lem:DP}
Let $\Scal=\{s_1,\dd s_k\}$ be a set of strings all having length $n$. For $i\ge 2$, the table $\Tf$ satisfies the following recurrence relation:
\[\Tf[i,W]=\argmin_{t\in \pre(i,W)}\sum_{s\in \Scal_{i}} \dSH(s,t).\]

Furthermore, the states $\Tf[1,W]$ can be computed for every $W$ reachable at position $1$ by setting $\Tf[0,\emptyset]=s_H^*[1]$, $\Tf[1,W]=\str{ba}$ where $W=\{j\in [k]|~s_j[1,2]=\str{ab}\}$ for every $\str{ab}\in \Scal[1,2]$, and $\Tf[1,\emptyset]=s_H^*[1,2]$, if it reaches $\emptyset$, where $s^*_H$ is chosen to be the lexicographically minimal solution for $\anyConsensus{s,\dH}$.
\end{lemma}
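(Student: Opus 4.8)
The plan is to reduce the optimization over length-$(i{+}1)$ prefixes to an optimization over their length-$i$ prefixes by means of a cost-decomposition identity, and then to argue that the two families collected in $\pre(i,W)$ exhaust the ways an optimal prefix can be assembled. Concretely, I would first prove that for every $t\in\Sigma^{i+1}$ reaching $(i,W)$, writing $\str b=t[i+1]$,
\[
\sum_{s\in\Scal_i}\dSH(s,t)=\sum_{s\in\Scal_{i-1}}\dSH\bigl(s,t[1\dd i]\bigr)+\sum_{j\notin W}\bigl[s_j[i+1]\neq\str b\bigr].
\]
The proof rests on the greedy characterization of $\dSH$ (Lemma~\ref{lem:greedy-dsh}) together with the observation that the greedy swap decisions at positions $\le i-1$ are determined by the first $i$ letters alone, hence are identical for $t$ and for $t[1\dd i]$. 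For $j\notin W$ there is no swap at position $i$, so the last letter contributes an isolated mismatch $[s_j[i+1]\neq\str b]$ while the remaining cost is exactly $\dSH(s_j[1\dd i],t[1\dd i])$. For $j\in W$ there is a swap at position $i$, which by greediness forbids a swap at $i-1$; a short computation then shows that both $\dSH(s_j[1\dd i+1],t)$ and $\dSH(s_j[1\dd i],t[1\dd i])$ equal the length-$(i-1)$ cost plus one, so these indices contribute nothing beyond the prefix term. Summing over $j$ gives the identity, whose second summand is a constant once $(i,W)$ and $\str b$ are fixed.

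Since that constant is independent of the prefix, minimizing $\sum_{s\in\Scal_i}\dSH(s,t)$ over $t$ reaching $(i,W)$ is equivalent to minimizing $\sum_{s\in\Scal_{i-1}}\dSH(s,t')$ over length-$i$ strings $t'=t[1\dd i]$ for which $t'\cdot\str b$ reaches $(i,W)$. I would next observe that this reachability constraint depends on $t'$ only through $t'[i]$ and through $W''=SW(\Scal,t',i-1)$, i.e.\ only through the state $(i-1,W'')$ that $t'$ reaches. This yields exactly the two families of $\pre(i,W)$. When $W''\neq\emptyset$, or when $W=\emptyset$ (so that the penultimate letter is not pinned to a specific value by the swap state), the letter $t'[i]$ is already fixed by $W''$ via Remark~\ref{rk:unique-suffix} or free to match, and optimal substructure makes $\Tf[i-1,W'']$ the best admissible prefix; appending $\str b$ gives the single-append candidate $\Tf[i-1,W'']\cdot\str b$. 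When $W\neq\emptyset$ but $t'$ reaches $(i-1,\emptyset)$, the forced value $t'[i]=\str a$ is \emph{not} determined by the empty state, so I apply the decomposition once more at level $i-1$: the optimal such $t'$ is $\Tf[i-2,W']\cdot\str a$, producing the double-append candidate $\Tf[i-2,W']\cdot\str{ab}$ with the stated conditions ($\emptyset$ at $i-1$, $W$ at $i$). Optimal substructure---replacing any prefix of an optimal $t$ by the corresponding table entry can only lower the cost, and appending determined letters preserves lex-minimality---then guarantees that $\argmin_{t\in\pre(i,W)}$ recovers $\Tf[i,W]$.

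For the base case $i=1$ I would simply enumerate the reachable pairs $(1,W)$: every nonempty reachable $W$ equals $\{j:\,s_j[1,2]=\str{ab}\}$ for some $\str{ab}\in\Scal[1,2]$ and forces the swapped suffix $\str{ba}$, whereas $W=\emptyset$ inherits the letters $s_H^*[1,2]$; together with the convention $\Tf[0,\emptyset]=s_H^*[1]$ this matches the stated initialization.

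The step I expect to be the main obstacle is the treatment of the states with $W=\emptyset$, where the last letter is a genuine degree of freedom rather than being forced by a swap. Here one must invoke Lemma~\ref{lem:swap-from-center} to argue that fixing that letter to $s_H^*[i+1]$ loses no optimum: a prefix whose swap state at position $i$ is empty can deviate from the lex-minimal Hamming solution at position $i+1$ only if position $i+1$ is itself involved in a swap at $i+1$, and that contingency is never produced by extending $\Tf[i,\emptyset]$ but instead by a double-append branch entering some state $(i+1,W')$ with $W'\neq\emptyset$. Making this interaction precise---reconciling the purely local swap-state bookkeeping of the table with the global optimality furnished by Lemma~\ref{lem:swap-from-center}, and checking that it is compatible with the lex-minimal tie-breaking used throughout---is the delicate part of the argument.
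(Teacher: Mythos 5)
Your proposal is correct and follows essentially the same route as the paper's proof: the greedy cost decomposition of $\dSH$ across the last one or two positions, a case split on the penultimate swap state (single-append from $\Tf[i-1,W'']$ when it is nonempty, double-append from $\Tf[i-2,W']$ when it is empty and $W\neq\emptyset$), optimal substructure with lexicographic tie-breaking, and Lemma~\ref{lem:swap-from-center} together with Remark~\ref{rk:unique-suffix} to fix the final letters. The ``delicate part'' you flag for the $W=\emptyset$ states is handled in the paper at the same level of detail, namely by building the choice of $s_H^*[i+1]$ into the definition of $\pre(i,W)$ and invoking Lemma~\ref{lem:swap-from-center} there, so your attempt matches the paper's argument in both structure and rigor.
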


\begin{proof}
The initialisation of $\Tf$ is clear, because for those cases, the constraints on the admissible strings define a unique candidate.   

For the recurrence relation, we write $t=\Tf[i+1,W]$. We prove, by induction, that we necessarily have $t\in \pre(i,W)$, which directly implies the claim. Let us write $W'=SW(\Scal_i,t,i)$. First assume \(W'\neq \emptyset\). We write $\Tf[i,W]=t\cdot \str{b}$. Extending from $t$ to $t\cdot \str{b}$ adds a cost of $1$ for each $j$ such that $s_j[i+1] \neq \str{b}$, unless a swap at $i$ occurs (i.e., $j \in W$), which cancels the mismatch. The net change is the number of mismatches at $i+1$ minus the number of swaps at $i$; hence, we have
\[
\sum_{s\in S_i}\dSH(\Tf[i,W],s)
=
\sum_{s\in S_{i-1}}\dSH(t,s)
+|\{\,j\in[k]\colon s_j[i+1]\neq b\}|
-|W|.
\]
Then if $$
\sum_{s\in S_{i-1}}\dSH(t,s)
\ge
\sum_{s\in S_{i-1}}\dSH(\Tf[i-1,W'],s),$$ one has 
$$\sum_{s\in S_i}\dSH(\Tf[i,W],s)
\ge
\sum_{s\in S_i}\dSH(\Tf[i-1,W']\cdot b,s),$$ and both sides are equal by optimality of \(\Tf[i,W]\).  Since ties are broken lexicographically, it follows that $t=\Tf[i-1,W']$, and 
$
\Tf[i,W]
=t\cdot\str{b}
\in\pre(W)$.
Now if \(W'=\emptyset\), we write $	\Tf[i,W]=t\cdot ba$
(note that $\str{ab}$ is entirely determined by $W'$ and $W$).	
In that case, one obtains:
\[
\sum_{s\in S_i}\dSH(\Tf[i,W],s)
=\sum_{s\in S_{i-2}}\dSH(t,s)
+\sum_{s\in S}\dSH(\str{ab},s[i,i+1]).
\]
and by optimality of $\Tf[i,W]$, writing $\tilde{W}=SW(\Scal,t,i-2)$, we have $t=\Tf[i-2,\tilde{W}]$, and $T[i-2,\tilde{W}]\cdot\str{ab}$ reaches $\emptyset$ at $i-1$ and $W$ at $i$. Hence, $\Tf[i,W]=\Tf[i-2,\tilde{W}]\cdot\str{ab}\in\pre(i,W)$.
\end{proof}

\begin{theorem}\label{thm:sSHconsensus poly}
    One can solve \sSHconsensus in $\cO(k^3n^2)$ or $\cO(|\Sigma|k^2n^2)$ time.
\end{theorem}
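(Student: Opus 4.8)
The plan is to turn the dynamic programming table $\Tf$, whose correctness is guaranteed by Lemmas~\ref{lem:DP}, \ref{lem:poly-reachable}, and~\ref{lem:swap-from-center}, into an explicit algorithm and to bound its running time. First I would precompute, in $\cO(kn)$ time, the lex-minimal solution $s_H^*$ to $\anyConsensus{s,\dH}$ via column-wise majority (using~\cite{GRA}), which supplies the base cases $\Tf[0,\emptyset]$ and $\Tf[1,\emptyset]$. I would then enumerate the reachable pairs $(i,W)$ position by position: by Lemma~\ref{lem:poly-reachable} there are at most $ki=\cO(kn)$ reachable sets at each position, and they can all be listed from the sets $W_{\str{ab},i}$ together with the transitions described in the proof of that lemma. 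Summed over all $n$ positions, this gives $\cO(kn^2)$ reachable states in total, and each can be indexed so that lookups into $\Tf$ take constant (or near-constant) time.

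Next I would fill the table using the recurrence of Lemma~\ref{lem:DP}. For a fixed reachable $(i,W)$, the set $\pre(i,W)$ is built from states $\Tf[i-1,W']$ (appending the forced last letter $\str{b}$) and, when $W\neq\emptyset$, from states $\Tf[i-2,W']$ (appending the forced suffix $\str{ab}$ and requiring that the intermediate position reaches $\emptyset$). Each candidate $t\in\pre(i,W)$ must be checked to verify that it indeed reaches $W$ at position $i$, which amounts to recomputing $SW(\Scal_i,t,i)$; this is an $\cO(k)$ test per candidate. The cost of each state is then obtained incrementally from the cost of its predecessor by adding the number of mismatches at the new position and subtracting $|W|$, exactly as in the displayed equations of Lemma~\ref{lem:DP}, which costs $\cO(k)$ per candidate as well. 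Since each $(i,W)$ aggregates over $\cO(\text{number of reachable }W'\text{ at }i-1)=\cO(ki)$ predecessors, and each predecessor costs $\cO(k)$ to validate and evaluate, filling one state costs $\cO(k^2 i)$; summed over the $\cO(ki)$ states at position $i$ and then over all $i\le n$, this yields the bound $\cO(k^3 n^2)$.

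For the alternative $\cO(|\Sigma|k^2n^2)$ bound, I would observe that the only freedom in the predecessor states lies in the last letter to be appended, and that this letter ranges over $\Sigma$; grouping predecessors by their appended letter lets one replace the factor counting predecessors by $|\Sigma|$ rather than $\cO(ki)$. Concretely, for each reachable $(i,W)$ one iterates over the at most $|\Sigma|$ possible penultimate/last-letter choices consistent with the forced suffix, selects for each the best predecessor $\Tf[i-1,W']$ (or $\Tf[i-2,W']$) already tabulated, and keeps the minimum. With $\cO(ki)$ states per position and $\cO(|\Sigma|)$ work per state times the $\cO(k)$ cost of the swap-set verification, one obtains $\cO(|\Sigma|k^2 n^2)$ overall. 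Finally, the answer is read off as $\min_W \sum_{s\in\Scal}\dSH(s,\Tf[n-1,W])$, and the optimal consensus string is recovered by the standard backward traceback through the table.

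The main obstacle I anticipate is the bookkeeping around the reachability condition: verifying that a candidate predecessor string, once extended by the forced suffix, genuinely reaches $W$ at position $i$ and (in the two-step case) $\emptyset$ at the intermediate position. This is delicate because, as Example~\ref{ex:same-suffix} shows, the pair $(i,W)$ is not determined by the length-$2$ suffix alone, so one cannot simply match suffixes; one must recompute the greedy swap decisions of Lemma~\ref{lem:greedy-dsh} near the boundary. Keeping these local checks to $\cO(k)$ time while correctly handling the interaction between a swap at $i-1$ and a swap at $i$ (the $W'=\emptyset$ versus $W'\neq\emptyset$ split of Lemma~\ref{lem:DP}) is where the careful argument lies; the rest is a routine aggregation of the per-state costs.
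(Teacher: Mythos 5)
Your dynamic program is the same one as the paper's (same table $\Tf$, same recurrence via Lemma~\ref{lem:DP}), and its correctness is fine; the gap is in your running-time analysis, which is exactly what the theorem claims. You evaluate each state $(i,W)$ by \emph{pulling} from its predecessors: $\cO(ki)$ candidate sets $W'$ at positions $i-1$ and $i-2$, each needing an $\cO(k)$ check. That makes filling one state cost $\cO(k^2 i)$, and with $\cO(ki)$ states at position $i$ the per-position cost is $\cO(k^3 i^2)$; summing over $i\le n$ gives $\cO(k^3 n^3)$, \emph{not} the $\cO(k^3 n^2)$ you assert --- your own accounting is off by a factor of $n$. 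The underlying problem is that in the backward direction the transition is many-to-one: given the target $W$ and the forced appended letters $\str{ab}$, every reachable $W'$ with $W_{\str{ab},i}\setminus W' = W$ is a valid predecessor, and their number is not bounded by $k$ or $|\Sigma|$. Your fix in the $|\Sigma|$-bound paragraph (``select for each letter the best predecessor already tabulated'') silently assumes you can find the cheapest \emph{valid} predecessor without inspecting each candidate $W'$, but validity depends on $W'$ itself, so no such lookup is available.

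The paper avoids this by organizing the evaluation in the \emph{forward} direction: from each tabulated state $(i,W')$ it pushes the $\cO(k)$ (or $\cO(|\Sigma|)$) extensions $t\cdot\str{a}$ with $\str{a}\in\Scal[i+1]$, $t\cdot s_H^*[i+2]$, and $t\cdot\str{ab}$ with $\str{ba}\in\Scal[i+2,i+3]$, computes in $\cO(k)$ time the set each extension reaches, and keeps only the best-scoring string per target state. Forward, the transition is deterministic given the state and the appended letters, so the out-degree is $\cO(k)$ (or $\cO(|\Sigma|)$) and the per-position work is $\cO(kn)\cdot\cO(k)\cdot\cO(k)=\cO(k^3 n)$, giving $\cO(k^3 n^2)$ overall. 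So to repair your proof you do not need any new lemma --- only to reverse the direction of the DP evaluation so that edges are enumerated from their source, where they are few, rather than guessed at their target, where they are not.
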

\begin{proof}
We use dynamic programming with the table $\Tf[i,W]$. We first compute $s^*_H$, a lexicographically minimal solution for \anyConsensus{s,\dH}, and we initialize the lines $0$ and $1$ of $\Tf$ as described in Lemma~\ref{lem:DP}. We then compute $\Tf$ inductively using the relation from Lemma~\ref{lem:DP}. More precisely, for each reachable $(i,W)$ we construct all the strings that are extensions of $t=\Tf[i,W]$ and are contained in $\pre(i',W')$ for some $W'$ that is reachable at position $i'=i+1$ or $i'=i+2$. By definition, those strings are exactly the strings of the form $t\cdot \str{a}$ for $\str{a}\in \Scal[i+1]$ that reach $W\neq \emptyset$ at position $i+1$, the string $t\cdot s_H^*[i+2]$ if it reaches $\emptyset$ at position $i+1$, and the strings of the form $t\cdot \str{ab}$ reaching $\emptyset$ at position $i+1$, with $\str{ba}\in \Scal[i+2,i+3]$. Note that for the special case $\Tf[0,\emptyset]=s_H^*[1]$, we compute only this last type of extensions, as the other ones would correspond to elements of the first line of the table, which is already computed during initialization.

For each such $t'$, we compute its partial score $d_{t'}$ (the sum of distances up to $i+1$), and store the tuple $(i',W', t', d_{t'})$, where $W'$ is the set reached by $t'$ on position $i'=|t'|-1$. We maintain only the best-scoring string for each $(i',W')$, and for every position $i'$ we sort tuples lexicographically by the indicator vector of $W' \subseteq [k]$. By construction, and by Lemma~\ref{lem:DP}, after processing every $i<i'$, the only tuple corresponding to $(i',W')$ contains $\Tf[ i',W']$.

From Lemma~\ref{lem:poly-reachable}, we have at most $\cO(kn)$ partial solutions per step. Each of them can be extended in at most $\cO(k)$ or $\cO(|\Sigma|)$ ways (extensions are always taken from $\Scal[i,i+1]$, or $s_H[i]$). Updating distances and computing swap sets both take $\cO(k)$ time. Thus, the total time complexity is $\cO(k^3n^2)$, or $\cO(|\Sigma|k^2n^2)$. \qed 

\end{proof}

\begin{example}
Let $s_1=\str{baba},~s_2=\str{cabc},~s_3=\str{abca}$ and $\Scal=\{s_1,s_2,s_3\}$, with $s^*_H=\str{aaba}$. The initial states are $\Tf[0,\emptyset]=\str{a}$, $\Tf[1,\emptyset]=\str{aa}$, $\Tf[1,\{1\}]=\str{ab}$, $\Tf[1,\{2\}]=\str{ac}$, $\Tf[1,\{3\}]=\str{ba}$. When processing, for example, $\Tf[1,\{3\}]$, we first compute $t=\str{ba}\cdot s^*_H[3]=\str{bab}$ and add the tuple $(2,\emptyset,t,d_t)=(2,\emptyset,\str{bab},3)$ only if $t$ reaches $\emptyset$ at position $2$. This is the case here (in particular, $t$ does not reach $\{1\}$ because $t$ and $s_1$ already have a swap at position $1$). Then, we compute the strings $\str{ba}\cdot x$ for $x\in \Scal[2]$, namely $\str{baa}$ and $\str{bab}$, but both reach $W=\emptyset$ at position $2$ so we do nothing. Finally, we compute $\str{ba}\cdot y \cdot x$ where $x \cdot y\in \Scal[3,4]$, obtaining $\str{baab},~\str{bacb},~\str{baac}$. Among those, only $\str{bacb}$ reaches $\emptyset$ at position $2$, and it reaches $W=\{2\}$ at position $3$, hence we create the tuple $(3,\{2\},\str{bacb},6)$ if we do not already have a tuple $(3,\{2\},t',d_t)$ with $d_t\le 6$.

In this example, the final solution $\str{baba}=\Tf[3,\emptyset]$ is found by extending $\str{bab}=\Tf[2,\emptyset]$, and $\sum_{s\in \Scal}\dSH(s,\str{baba})=4$. See Table~\ref{tab:running_full} for the full DP-table explored by the algorithm. Note that $(2,\{1,2\})$ is also reachable, for example by $\str{caba}$, but this string is also reaching $\emptyset$ at position $1$, and since $\texttt{c}\neq s_H^*[1]$ it can be ignored by Lemma~\ref{lem:ham-to-sw}, and is indeed never computed in the algorithm.

\begin{table}[h]
	\centering
	\caption{DP‐states \(T[i,W]\) for 
		\(S=\{\str{baba},\str{cabc},\str{abca}\}\), 
		\(s_H^*=\str{aaba}\).}
	\label{tab:running_full}
	\begin{tabular}{l@{\quad}|@{\quad}l@{\quad}l@{\quad}l@{\quad}l}
		\(W\)\(\backslash\)\(i\) 
		& \(0\)        & \(1\)        & \(2\)                                    & \(3\)                                          \\ \hline
		\(\emptyset\) 
		& \(\str{a}\) & \(\str{aa}\) & \(\str{bab}=T[1,\{3\}]\cdot s_H^*[3]\)
		& \(\str{baba}=T[2,\emptyset]\cdot s_H^*[4]\)       \\
		\(\{1\}\)   
		&             & \(\str{ab}\) & 
		& \(\str{abab}=T[2,\{2\}]\cdot\str{b}\)           \\
		\(\{2\}\)   
		&             & \(\str{ac}\) & \(\str{aba}=T[1,\{1\}]\cdot\str{a}\)
		& \(\str{bacb}=T[1,\{3\}]\cdot\str{cb}\)          \\
		\(\{3\}\)   
		&             & \(\str{ba}\) & \(\str{abc}=T[1,\{3\}]\cdot\str{c}\)
		& \(\str{abac}=T[2,\{2\}]\cdot\str{c}\)           \\
	\end{tabular}
\end{table}
\end{example}
\section{Open questions}

We leave the following questions open: 
\begin{itemize}
    \item Is $\rsSHconsensus(d)\in \FPT$?
    \item Can one find conditional lower bounds, or a more efficient algorithm, for $\sSHconsensus$?
    \item Can our algorithms be generalized to allow deleting a certain number of outlier strings or columns in $\Scal$ such that the resulting substrings admit a solution (cf. \cite{BUL})?
    \item As the \FPT algorithms for those problems (even the classical algorithm from~\cite{GRA} for the Hamming distance) start with some given candidate, could they be studied from the perspective of learning-augmented algorithms?
\end{itemize}

\bibliographystyle{splncs04}
\bibliography{biblio}
\end{document}